\newtheorem{thm}{Theorem}[section]
\newtheorem{prop}[thm]{Proposition}
\newtheorem{ex}[thm]{Example}
\newcommand{\RR}{\mathbb{R}}
\newcommand{\NN}{\mathbb{N}}
\newcommand{\EE}{\mathbb{E}}
\newcommand{\PP}{\mathbb{P}}
\begin{document}
	
\title{Non-stationary max-stable models with an application to heavy rainfall
data}

\author{Carolin Forster$^1$ and Marco Oesting$^{1,2}$}

\date{%
	$^1$ {\normalsize  Institute for Stochastics and Applications, University of Stuttgart,
	70569 Stuttgart, Germany}\\%
	$^2$ {\normalsize Stuttgart Center for Simulation Science (SC SimTech), University of Stuttgart,
	70569 Stuttgart} \\[2ex]%
	\today
}





\maketitle

\begin{abstract}
	In recent years, parametric models for max-stable processes have become a popular choice for modeling spatial extremes because they arise as the asymptotic limit of rescaled maxima of independent and identically distributed random processes. Apart from few exceptions for the class of extremal-$t$ processes, existing literature mainly focuses on models with stationary dependence structures. In this paper, we propose a novel non-stationary approach that can be used for both Brown-Resnick and extremal-$t$ processes -- two of the most popular classes of max-stable processes -- by including covariates in the corresponding variogram and correlation functions, respectively. 
We apply our new approach to extreme precipitation data in two regions in Southern and Northern Germany and compare the results to existing stationary models in terms of Takeuchi’s information criterion (TIC). Our results indicate that, for this case study, non-stationary models are more appropriate than stationary ones for the region in Southern Germany. 
In addition, we investigate theoretical properties of max-stable processes conditional on random covariates. We show that these can result in both asymptotically dependent and asymptotically independent processes. Thus, conditional models are more flexible than classical max-stable models. 
\end{abstract}


\section{Introduction}\label{sec1}

Weather extremes such as heavy rainfall often cause enormous social and economic damages.
For example, the summer 2021 flood event in Germany has involved at least 40,000 people in the Ahr valley and nearby regions of North-Rhine Westphalia \citep{bosseler2021living}. 
It is anticipated that an increase in extreme precipitation will occur due to global warming \citep[cf. for example][]{ipcc,asadieh2015global}. 
Often, such events occur simultaneously in various locations.
Thus, models for spatial extremes are of great interest as they can support a better understanding and knowledge about such events. In our work, we model spatial extremes
based on spatially indexed block maxima \
$\{z(s), \, s \in S\}$ for locations $s$ in some region $S \subset \mathbb{R}^2$ via max-stable processes that are the only non-degenerate limit processes of rescaled maxima.

In recent years, these processes have been used in many environmental applications, for instance with heavy rainfall \citep[see, e.g.,][]{coles1996modelling,davison2012statistical,sebille2017modeling,gelfand2019handbook}, extreme temperatures \citep[see, e.g.,][]{huser-genton-16,thibaud2016bayesian,davison2012statistical}, 
extreme wind \citep[see, e.g.,][]{Engelke2015,Genton2015,oesting2017statistical}, severe storms \citep[see, e.g.,][]{koh-etal-2022}, drought \citep[see, e.g.,][]{oesting2018spatial}, or extreme snowfall or depth \citep[see e.g.,][]{blanchetdavison2011,gaume2013mapping,nicolet2015inferring}. 
Within the class of max-stable processes, the subclasses of extremal-$t$ processes \citep{opitz2013extremal} and Brown-Resnick processes \citep{kab_2009} belong to the most frequently used parametric models. So far, the research in this field mainly focused on 
models with dependence structures that are stationary in space. 

Due to the complexity of underlying physical phenomena, however, in many applications, it is reasonable to assume that the dependence structure is not stationary, but, for instance, also depends on covariates. Ignoring these effects may describe the data inadequately and, thus, may result in erroneous estimates for return levels or other characteristics of interest. 
In recent years, there have been some first approaches to construct models with non-stationary dependence structures. 
For instance, \citet{blanchetdavison2011} achieve non-stationarity by applying classical stationary max-stable models on a higher-dimensional climate space which, additionally to the geographical coordinates, includes further suitable covariates.
In contrast, \citet{youngman} suggests to deform the domain of the considered spatial process by utilizing a spatial deformation or dimension expansion and apply a stationary model in the transformed space in order to get a non-stationary dependence function in the original domain. 
Instead of using a stationary max-stable model in a climate space, that is build parametrically, \citet{chevalier} introduce to replace the climate space by a latent space. In contrast, this latent space is constructed with nonparametric fitting approaches, that make use of multidimensional scaling (MDS).
Another example is the approach by \citet{huser-genton-16} who propose a non-stationary approach that connects max-stable processes, especially the extremal-$t$ model, with the non-stationary correlation model of \citet{paciorek-shervish-06}.
Their correlation function contains spatially varying covariance matrices, where significant covariates are included. This approach with correlation functions can also be applied to variograms, i.e., to max-stable Brown-Resnick processes with the restriction that the underlying variograms are bounded, see \citet{shao2022flexible}. 

In this paper, we propose an approach that can be used for extremal-$t$ and Brown-Resnick processes covering both the cases of bounded and unbounded variograms by applying a stationary max-stable model on a higher dimensional space including covariates. Similarly to the approaches by \citet{blanchetdavison2011}, \citet{youngman} and \citet{chevalier}, our approach is based on the idea to use a stationary approach in a different space. Our approach is closest to the first of the three approaches mentioned above, i.e., the approach of \citet{blanchetdavison2011}, which they use for the Schlather model \citep{schlather2002models}, i.e., a special case of the extremal-$t$ model that we consider here. Furthermore, we consider more general types of correlation functions.

The outline of this paper is as follows: In Section \ref{Theoretical background}, we provide the theoretical background on max-stable processes particularly focusing on the classes of extremal-$t$ and Brown-Resnick processes. Besides stationary approaches, the section also covers existing non-stationary approaches and spatial dependence measures. Section \ref{Novel approach for Non-Stationarity} is dedicated to our novel non-stationary approach for both Brown-Resnick and extremal-$t$ processes, where we include covariates in the corresponding variogram and correlation functions, respectively. In Section \ref{Application}, we apply these models to heavy rainfall data in Germany
and compare the results to existing models. 
While the former sections focus on max-stable models with fixed covariates, in Section \ref{bridgebetweenasympandsy}, we provide an extension to the case of random covariate processes. In particular, we study extremal dependence in random scale constructions of conditional max-stable models resulting both in asymptically dependent and asymptotically independent processes. Finally, Section \ref{Discussion} concludes with a discussion. 

\section{Theoretical background}
\label{Theoretical background}

\subsection{Max-stable processes}
Let $\mathcal{F} \subseteq \{ f: \mathbb{R}^d \to \mathbb{R} \}$ be some suitable space of real-valued functions on $\RR^d$ equipped with the $\sigma$-algebra generated by the cylinder sets of the form 
\begin{equation}
\label{cylinder_sets}
 \{ f \in \mathcal{F}: \, f(s_1) \in B_1, \ldots, f(s_n) \in B_n \}   
\end{equation}
where $s_1,\ldots,s_n \in \RR^d$, $B_1,\ldots,B_n \subset \RR$ are Borel sets and $n \in \NN$. Furthermore, let $X=\{X(s), \, s \in \mathbb{R}^d\}$ be a stochastic process, i.e., a $\mathcal{F}$-valued random object and denote independent copies of $X$ by $X_{1},X_{2},\ldots$. If there exist functions $a_{n}: \RR^d \to (0,\infty)$ and $b_{n}: \RR^d \to \mathbb{R}$ such that
\begin{equation} \label{maxstable}
\mathcal{L} \left\{\frac{\max_{i=1}^n X_i(s) - b_n(s)}{a_n(s)}, \, s \in \mathbb{R}^d \right\} \stackrel{n \to \infty}{\longrightarrow} \mathcal{L}\{Z(s), \, s \in \mathbb{R}^d \}
\end{equation}
weakly in $\mathcal{F}$
and the univariate margins of $X=\{X(s), \, s \in \mathbb{R}^d\}$ are non-degenerate, then the limit process $Z$ is necessarily a max-stable process, i.e., it satisfies
the following max-stability property: for all $s \in \mathbb{R}^{d}$, there exist sequences $\{a'_{n}(s)\}_{n \in \mathbb{N}} \subset (0,\infty)$ and $\{b'_{n}(s)\}_{n \in \mathbb{N}} \subset \mathbb{R}$ s.t.
\begin{equation*} 
\mathcal{L} \left\{\frac{\max_{i=1}^n Z_i(s) - b_n'(s)}{a_n'(s)}, \, s \in \mathbb{R}^d \right\}\stackrel{d}= \mathcal{L}\{Z(s), \, s \in \mathbb{R}^d\},
\end{equation*}
where $Z_i$, $i \in \mathbb{N}$, are independent copies of $Z$.
As the law of the process $Z$ is uniquely defined by its finite dimensional distributions, this is equivalent to
\begin{align*}
 &{}\PP\left( Z(s_{1}) \leq a'_{n}(s_{1})z_{1}+ b'_{n}(s_{1}),\cdots,Z(s_{k}) \leq a'_{n}(s_{k})z_{k}+b'_{n}(s_{k})
        \right)^{n}\\
&= \PP\left( Z(s_{1}) \leq z_{1},\cdots,Z(s_{k}) \leq z_{k})
        \right),
\end{align*}  
for all $z_1,\hdots,z_k \in \mathbb{R}, s_1,\hdots,s_k \in \mathbb{R}^{d}$ and $k \in \mathbb{N}$ \citep[see for instance,][]{huser2020advances}.

From univariate extreme value theory \citep[see, for instance,][]{Embrechts1997,coles2001introduction}, it is known that the non-degenerate margins of the max-stable process $Z$ in \eqref{maxstable} follow a generalized extreme value (GEV) distribution, which can be described via the parameters $\xi \in \mathbb{R}$ (shape), $\mu \in \mathbb{R}$ (location) and $\sigma > 0$ (scale) by the following cumulative distribution function:
\begin{equation*}
 G_{\xi, \mu, \sigma}(x) = \begin{cases} 
     \exp\Big(-\left(1 + \xi \frac{x - \mu}{\sigma}\right)^{-1/\xi}\Big), & \xi \neq 0\\
     \exp\left(-\exp\left(-\frac{x-\mu}\sigma\right)\right), & \xi=0   
                           \end{cases},
   \quad 1 + \xi \frac{x - \mu}{\sigma} > 0.
\end{equation*}
Under marginal transformations between GEV distributions, the max-stability property is maintained.
Thus, it is a common choice to consider max-stable processes on the unit Fr\'echet scale, i.e.,
\begin{equation*}
  \PP(Z(s) \leq z)=\exp\left(-\frac{1}{z}\right), \quad z>0, \quad s \in \mathbb{R}^d. 
\end{equation*}
These processes are called simple max-stable processes.
By \citet{de1984spectral}, any simple max-stable process in $\mathcal{F}=C(\RR^d)$, i.e., any sample-continuous simple max-stable process can be constructed by 
\begin{equation}
\label{spectralrepresentation}
Z(s) = \max_{i \in \NN}  U_i W_i(s), \quad s \in \mathbb{R}^d,  
\end{equation}
where $U_i$ are points of a Poisson point process
on $(0,\infty)$ with intensity $u^{-2} \mathrm{d}u$ and $W_i$ ($i \in \NN$) are independent copies of a nonnegative sample-continuous stochastic process $W$ on $\mathbb{R}^d$, 
called spectral process,  with $\EE(W(s))=1$, $s \in \mathbb{R}^d$.
According to the construction in \eqref{spectralrepresentation}, max-stable processes can be interpreted as the pointwise maxima of ``storms" with amplitudes $U_i$ and shapes $W_i(\cdot)$
\citep[cf.][]{smith1990max}.
If $W$ is not sample-continuous, the resulting process $Z$ in \eqref{spectralrepresentation} is no longer sample-continuous, but still max-stable w.r.t.\ all finite dimensional distributions, i.e., max-stable with respect to the space $\mathcal{F} = \{f: \RR^d \to \RR\}$.

From representation \eqref{spectralrepresentation}, it also follows that
the joint cumulative distribution function of any random vector $(Z(s_{1}),\hdots,Z(s_{k}))^{T}$, $s_1,\hdots,s_k \in \mathbb{R}^d$, can be written as
\begin{align}
\label{jdfrv}
\PP(Z(s_{1}) \leq z_{1},\hdots,Z(s_{k}) \leq z_{k})&=\exp(-V_{s_1,\hdots,s_k}(z_{1},\hdots,z_{k})),
\end{align}
for $z_1,\hdots,z_k >0$, where 
$$V_{s_1,\hdots,s_k}(z_{1},\hdots,z_{k})=\EE\left[\max\left\{\frac{W(s_{1})}{z_{1}},\hdots,\frac{W(s_{k})}{z_{k}}\right\}\right]$$
is the so-called exponent function.

Thus, if $V$ is differentiable, the corresponding joint probability density function has the form 
\begin{equation}
\label{density}
\begin{aligned}
&f_{s_{1},\hdots,s_{k}}(z_{1},\hdots,z_{k}) \\
&=\exp\{-V_{s_1,\hdots,s_k}(z_{1},\hdots,z_{k})\}\sum_{\pi \in \mathcal{P}}\prod_{\tau \in \pi} \left\{-\frac{\partial}{\partial z_\tau} V_{s_1,\ldots,s_k}(z_{1},\hdots,z_{k})\right\},
\end{aligned}
\end{equation}
where $\mathcal{P}$ is the set of all partitions of $K=\{1,\hdots,k\}$ and $\frac{\partial}{\partial z_\tau} V_{s_1,\ldots,s_k}$ indicates the partial derivative of $V_{s_1,\hdots,s_k}$  w.r.t.\ to all elements of $z_\tau = (z_i)_{i \in \tau}$ \citep{huser2020advances}   \
Consequently, the number of summands in \eqref{density} equals the cardinality of $\mathcal{P}$. This superexponentially growing number, the so-called Bell number, makes the computation of the full likelihood intractable even for moderate dimensions \citep{gelfand2019handbook}. 
The most common solution is to use the pairwise likelihood which is based on bivariate probability density functions of the form 
\begin{align}
\label{bivdensity}
f_{s_{i},s_{j}}(z_{i},z_{j})={} \exp\{-V_{s_i,s_j}(z_{i},z_{j})\} 
\Big\{ & \frac{\partial}{\partial z_i}  V_{s_{i},s_{j}}(z_{i},z_{j})\frac{\partial}{\partial z_j} V_{s_{i},s_{j}}(z_{i},z_{j}) \nonumber\\ 
&  -\frac{\partial}{\partial z_i \partial z_j} V_{s_{i},s_{j}}(z_{i},z_{j})\Big\}
\end{align}

for $1 \leq i \neq j \leq k$ only.
More details on the pairwise likelihood, which uses a composition of bivariate density functions of the form \eqref{bivdensity}, can be found in Section \ref{modelfittingandcomparison}. A special case of the joint distribution function of random vectors in \eqref{jdfrv} is 
\begin{align*}
\PP(Z(s_{1}) \leq z,\hdots,Z(s_{k}) \leq z)
&= \exp(-z^{-1}V_{s_1,\hdots,s_k}(1,\hdots,1))\\
&=\exp(-\theta(s_1,\hdots,s_k)/z), 
\end{align*}
with the first equation utilizing the homogeneity of the exponent function. 
The characteristic $\theta(s_1,\hdots,s_k)=V_{s_1,\hdots,s_k}(1,\hdots, 1)$ is called extremal coefficient and serves as a measure of extremal dependence and for $k$ locations. Its value ranges between $1$ and $k$ and can be interpreted as the effective number of independent random variables among $Z(s_1),\ldots,Z(s_k)$, i.e., a value of $k$ indicates full independence, while $1$ means perfect dependence \cite[cf.][]{schlather-tawn-02}.  

In the bivariate setting, alternatively, the (upper) tail dependence coefficient, defined by 
$$ \chi(s_1,s_2) = \lim_{u \to \infty} \PP(Z(s_2) > u \mid Z(s_1) > u) $$ 
for all $ s_1, s_2 \in \RR^d$ can be used to measure extremal dependence. 
More precisely, two random variables $Z(s_1)$ and $Z(s_2)$ are called asymptotically independent if $\chi(s_1,s_2) = 0$ and they are called asymptotically dependent if $\chi(s_1,s_2) > 0$.
Asymptotic dependence indicates a positive probability that extreme events take place simultaneously at several locations independent of the threshold size.
If $Z$ is a max-stable process as above, it additionally holds
$$\chi(s_1,s_2)=2 - V_{s_1,s_2}(1,1) =2 - \theta(s_1,s_2)  \quad s_1, s_2 \in \mathbb{R}^d. $$ Then, asymptotic independence implies that the two random variables $Z(s_1)$ and $Z(s_2)$ are independent.
As will be demonstrated in Section \ref{Application}, in our application, the assumption of asymptotic dependence is reasonable. However, it depends on the application and, in many cases, environmental data exhibit a decrease in the dependency as the events become more extreme \citep[see, e.g.,][]{huser2020advances} which indicates asymptotic independence.
In the following subsections, we consider two of the most popular max-stable models. 

\subsection{Extremal-$t$ processes} \label{sec:extremalt}

The extremal-$t$ process \citep{opitz2013extremal}
is a max-stable process based on i.i.d.\ Gaussian processes $\varepsilon_{i}(\cdot)$ with mean zero and variance one. Its spectral process in \eqref{spectralrepresentation} may be written as
\begin{equation}
\label{et_Wi}
 W_{i}(s) = c_{\nu} \cdot \max\{0,\varepsilon_{i}(s)\}^{\nu}, \quad s \in \mathbb{R}^d, 
\end{equation}
with degree of freedom $\nu > 0$ and scaling factor $c_{\nu}=\pi^{\frac{1}{2}}2^{1-\frac{\nu}{2}}/\Gamma\{(\nu+1)/2\}$ that guarantees the condition $\EE(W_{i}(s))=1$, $s \in \mathbb{R}^d$. Moreover, there exists a closed formula for the bivariate exponent function that is 
\begin{equation*}
\begin{aligned}
V_{s_1,s_2}(z_{1},z_{2})
&=  \frac{1}{z_{1}} T_{\nu+1}\Big\{ -\frac{\rho(s_1,s_2)}{a(s_1,s_2)}+\frac{1}{a(s_1,s_2)}\Big(\frac{z_{2}}{z_{1}} \Big)^{\frac{1}{\nu}}\Big \}\\
&+\frac{1}{z_{2}} T_{\nu+1}\Big\{ -\frac{\rho(s_1,s_2)}{a(s_1,s_2)}+\frac{1}{a(s_1,s_2)}\Big(\frac{z_{1}}{z_{2}} \Big)^{\frac{1}{\nu}}\Big \}, \quad z_{1},z_{2} >0, 
\end{aligned}
\end{equation*}
where $T_{w}(\cdot)$ denotes the c.d.f.\ of a student-$t$ distribution with $w$ degrees of freedom, $\rho: \RR^d \times \RR^d \to [-1,1]$ defined by
$\rho(s_1,s_2)=\mathrm{corr}\{\varepsilon_{i}(s_{1}),\varepsilon_{i}(s_{2})\}$ is the correlation function of $\varepsilon_{i}(\cdot)$ and $a(s_1,s_2)=(1-\rho(s_1,s_2)^{2})/(\nu+1)$ 
\citep[cf.][]{gelfand2019handbook}. 
It can be deduced that the bivariate extremal coefficient $\theta(s_1,s_2)$ is given by 
\begin{equation*}
\theta(s_1,s_2)=V_{s_1,s_2}(1,1) =  2T_{\nu+1}\Big\{ -\frac{\rho(s_1,s_2)}{a(s_1,s_2)}+\frac{1}{a(s_1,s_2)}\Big \}.
\end{equation*}
Analogously, it can be seen that all finite-dimensional distributions and, consequently, the distribution of the process, can be fully described by the correlation function $\rho(\cdot,\cdot)$  of the underlying Gaussian field and the parameter $\nu$. In particular, the extremal-$t$ model is stationary if and only if $\rho(s_1,s_2)$ depends on $s_1-s_2$ only, i.e., $\varepsilon_{i}(\cdot)$ is stationary.

\subsection{Brown-Resnick processes}

The Brown-Resnick process \citep{kab_2009}
is also a max-stable process based on i.i.d.\ centered Gaussian processes $\tilde{\varepsilon_{i}}(\cdot)$.
 For this model, the stochastic process $W_{i}(\cdot)$ in \eqref{spectralrepresentation} has the form  
$$ W_{i}(s) = \exp\Bigg\{\tilde{\varepsilon_{i}}(s)- \frac 1 2 \mathrm{Var}(\tilde{\varepsilon_{i}}(s))\Bigg\}, \quad s \in \mathbb{R}^d,$$
and the finite-dimensional distributions of the Brown-Resnick process $Z$ depend on the function
      $$ \gamma: \RR^d \times \RR^d \to [0,\infty), \  \gamma(s_1,s_2) = \frac 1 2 \mathrm{Var}(\tilde{\varepsilon_{i}}(s_1) - \tilde{\varepsilon_{i}}(s_2)), $$
      called the variogram of $\tilde{\varepsilon_{i}}(\cdot)$, only \citep[see,][Theorem 1 with $\lambda =1$]{kabluchko2011extremes}.
The variogram of $\tilde{\varepsilon_{i}}(\cdot)$ is a conditionally negative definite function with
\begin{equation*}
\gamma(s,s)=0, \quad s \in \mathbb{R}^d, 
\end{equation*}
see \citet{kab_2009}. There exists a closed formula for the bivariate exponent function, that is, 
\begin{equation*}
V_{s_1,s_2}(z_{1},z_{2}) = \frac{1}{z_{1}}\Phi\Big\{ \frac{b}{2}-\frac{1}{b}\log\Big(\frac{z_{1}}{z_{2}} \Big)\Big\}+\frac{1}{z_{2}}\Phi\Big\{ \frac{b}{2}-\frac{1}{b}\log\Big(\frac{z_{2}}{z_{1}} \Big)\Big\},  
\end{equation*}
where $b=\sqrt{(2\gamma(s_{1},s_{2}))}$ \citep{huser2013composite}. Thus, the bivariate extremal $\theta(s_1,s_2)$ coefficient is 
\begin{equation*}
\theta(s_1,s_2)=V_{s_1,s_2}(1,1) = 2\Phi\Big\{ \frac{b}{2}\Big\}.  
\end{equation*}
From the fact that the variogram determines the distribution of the process uniquely, it can be seen that the Brown-Resnick process is stationary if and only if $\gamma(s_1,s_2)$ only depends on the separation vector $h = s_1 - s_2$ and this is the case if the Gaussian process $\tilde{\varepsilon_{i}}(\cdot)$ possesses stationary increments. 

\subsection{Stationary approaches}
The most common stationary approaches for spatial extremes use isotropic or geometric anisotropic models. In the isotropic extremal-$t$ models and Brown-Resnick models, the correlation function and variogram, respectively, solely depend on the distance of geographic coordinates. An example of such a correlation function is the powered exponential covariance function
\begin{align}
\label{iso_corr}
  &\rho(\|h\| ) = \exp\Big\{-\left(\| qh\|\right)^{\alpha } \Big\}, \quad h \in \RR^d,
	\end{align}
and an example of such a variogram is the power variogram model
\begin{align}
\label{iso_var}
  &\gamma(\|h\| ) = \left(\| qh\|\right)^{\alpha},\quad h \in \RR^d,
	\end{align}
where $1/q>0$ is the range and $\alpha \in (0,2]$ the smoothness parameter. 

In contrast to the isotropic model, the correlation function and variogram, respectively, of a stationary but anisotropic model also changes with direction. Such a model can be obtained, for instance, by building 
in an anisotropy matrix $A$ which allows for rotation and dilation.
A valid correlation function
\begin{align}
\label{ani_corr}
	&\rho(\|h\| ) = \exp\Big\{-\left(\| Ah\|\right)^{\alpha } \Big\},\\
	    \intertext{and a valid variogram}
	    \label{ani_var}
	    &\gamma(\|h\|) = \left(\| Ah\|\right)^{\alpha },
\end{align}
can be obtained for any anisotropy matrix $A$, e.g., in the case $d=2$, 
\begin{align}
    \label{animat}
		&A=\left( \begin{array}{cc} q_1& 0\\ 0 & q_2 \end{array} \right) \cdot
		\left( \begin{array}{cc} \cos(\vartheta) & \sin(\vartheta)\\ -\sin(\vartheta) 				& \cos(\vartheta) \end{array} \right),
\end{align}
with parameters $\alpha \in (0,2]$, $-\frac \pi 4 \leq \vartheta \leq \frac \pi 4$, $q_1,q_2>0$.
This specific choice allows for different range parameters $1/q_1$ and $1/q_2$ and a rotation specified by $\vartheta$.

\subsection{Non-stationary approaches}
In this section, two existing non-stationary approaches are described in more detail. 
The first approach of \citet{huser-genton-16} is based on spatially varying $2 \times 2$ covariance matrices  $\Omega(s), s \in S \subset \mathbb{R}^2$, which are included in the quadratic form 
\begin{equation*}
Q(s_1,s_2)= h^{T}\left( \frac{\Omega(s_{1})+ \Omega(s_{2})}{2}\right)^{-1}h. \end{equation*}
For a valid isotropic correlation model on $\mathbb{R}^d$, e.g., the powered exponential family with unit range
\begin{equation}
R(\| h\| ) = \exp\left\{-\| h\|^{\alpha} \right\},
\end{equation}
a valid non-stationary correlation function on $\mathbb{R}^d$ can be obtained by
\begin{equation}
\rho(s_{1},s_{2}) = \lvert\Omega(s_{1})\rvert ^{\frac{1}{4}} \lvert\Omega(s_{2})\rvert^{\frac{1}{4}} \left\lvert \frac{\Omega(s_{1})+ \Omega(s_{2})}{2} \right\rvert^{-\frac{1}{2}} \cdot R\big(Q(s_1,s_2)^{\frac{1}{2}} \big),
\end{equation}
see \citet{paciorek-shervish-06}. 
 \citet{huser-genton-16} propose the construction of a non-stationary extremal-$t$ process based on Gaussian random fields with such a non-stationary correlation function
 analogously to Section \ref{sec:extremalt}. 
For $s \in \RR^d$, the covariance matrices may be chosen as
\begin{equation*}
\Omega(s) = 
\begin{pmatrix} 
\omega_{x}^{2}(s) & \omega_{x}(s)\omega_{y}(s)\delta(s)\\
 \omega_{x}(s)\omega_{y}(s)\delta(s) & \omega_{y}^{2}(s)\\
\end{pmatrix},
\quad
\end{equation*}
where $\omega_{x}(s) > 0, \omega_{y}(s) > 0$ are the dependence ranges and $\delta(s) \in (-1,1)$ measures the local anisotropy level. 
The non-stationary dependence structure is modelled by including important covariates in the dependence ranges and the anisotropy parameter through link functions. 
For more details, we refer to
\citet{huser-genton-16}.

The second approach of \citet{blanchetdavison2011} apply the stationary
max-stable models of \citet{smith1990max} and \citet{schlather2002models} on an extended and transformed
space, which allows geometric anisotropy. 
More precisely, for a valid isotropic correlation model $R(\| \cdot\|)$ on $\RR^3$, considering max-stable processes based on isotropic correlation functions, they propose to consider a correlation function of the type $R(\|B\cdot\|)$ with 
\begin{align}
    \label{anitransmat}
		&B=\left( \begin{array}{ccc}q_1 \cos(\vartheta) & q_1\sin(\vartheta) & 0 \\ -q_2\sin(\vartheta) 				& q_2\cos(\vartheta)  & 0 \\ 0 & 0 & q_3\end{array} \right),
\end{align}
where $q_1,q_2,q_3>0$ and the components of $S$ correspond to longitude, latitude and altitude. This model is stationary in three-dimensional space, but it is non-stationary in the underlying two-dimensional space.

Besides Smith processes, they apply the resulting non-stationary correlation function also to Schlather processes, i.e., max-stable processes where the stochastic process $W(\cdot)$ in \eqref{spectralrepresentation} has the form
\begin{equation*}
\label{schlather_Wi}
 W(s) = 2\pi^{\frac{1}{2}} \cdot \max\{0,\varepsilon_{i}(s)\}, \quad s \in \mathbb{R}^d. 
\end{equation*}
Note that Schlather processes are special cases of extremal-$t$ processes with $\nu=1$ in \eqref{et_Wi}.

Analogously to the idea of \citet{blanchetdavison2011} we introduce a more general novel non-stationary approach in Chapter \ref{Novel approach for Non-Stationarity}, where we include multivariate covariates in flexible classes of valid variogram and correlation functions in higher dimensions allowing both more individual and mixed effects of the different covariates. Hence, we obtain flexible classes of valid non-stationary models for Brown-Resnick and the extremal-$t$ processes.

\section{Novel approach for Non-Stationarity}
\label{Novel approach for Non-Stationarity}
Henceforth, we use the following notation: For some $d$-dimensional vector $s = (s_i)_{i=1}^d \in \RR^d$ and some index set $I \subset \{1,\ldots,d\}$, we write $s_I = (s_i)_{i \in I}$ to denote the vector build via the components in the set $I$.
In this section, we propose non-stationary models for Brown-Resnick and extremal-$t$ processes. The Brown-Resnick process depends on a variogram, i.e.,  a symmetric function $\gamma: \RR^d \times \RR^d \to [0,\infty)$ that satisfies $\gamma(s,s)=0$ for all $s \in \RR^d$ and is  conditionally negative definite function, that is,$$\label{negdef}
\sum\nolimits_{i=1}^n \sum\nolimits_{j=1}^n a_{i} a_{j} \gamma(s_{i},s_{j}) \leq 0$$ 
for all $a_{1},\ldots,a_{n} \in \RR$  with $\sum\nolimits_{i=1}^n a_{i}=0$, $s_{1},\ldots,s_{n} \in \RR^d$, $n \in \NN$.

In contrast, the extremal-$t$ model depends on a correlation function, i.e., a symmetric function $\rho: \RR^d \times \RR^d \to [-1,1]$ that satisfies $\rho(s,s)=1$  for all $s \in \RR^d$ and is
 positive definite, that is,
$$\sum\nolimits_{i=1}^n \sum\nolimits_{j=1}^n a_{i} a_{j} \rho(s_{i},s_{j}) \geq 0$$ for all $s_{1},\ldots,s_{n} \in \RR^d$ \text{and} $a_{1},\ldots,a_{n} \in \RR$, $n \in \NN$.

Our aim is to construct valid conditionally negative definite functions $\gamma$ and positive definite functions $\rho$ that do not only depend on the separation vector $h$ by including multivariate covariates -- described via a function $c$ -- in valid variogram and correlation functions, respectively.

\begin{prop}
\label{propnewmodel}
Let $I_1,\ldots,I_k \subset \{1,\ldots,p\}$ be arbitrary index sets and let $c: \RR^d \to \RR^p$ be an arbitrary function. Then, for all matrices $A_0 \in \RR^{d \times d }$, $A_j \in \RR^{\lvert I_j \rvert \times \lvert I_j \rvert}$, j=1,\ldots,k, and $\alpha_0, \alpha_1,\ldots,\alpha_k \in (0,2]$ and $\beta \in (0,1]$, the function
\begin{align}
\label{newmodel_propos1}
 \gamma:{}& \RR^d \times \RR^d \to [0,\infty), \nonumber\\
 \gamma(x,y) ={}& \left(\|A_{0} (x - y)\|^{\alpha_0}+\sum\nolimits_{j=1}^{k} \|A_j (c(x)_{I_j} - c(y)_{I_j})\|^{\alpha_{j}}\right)^\beta, 
\end{align}
is a valid variogram. 

Consequently, the function $\rho: \RR^d \times \RR^d \to (0,1]$, $\rho(x,y) = \exp(- \gamma(x,y))$ is a valid correlation function.
\end{prop}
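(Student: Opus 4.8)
The plan is to verify the three defining properties of a variogram for $\gamma$ in \eqref{newmodel_propos1}---symmetry, vanishing on the diagonal, and conditional negative definiteness---and then to deduce the statement about $\rho$ from Schoenberg's theorem. Symmetry is immediate from the symmetry of the Euclidean norm, and $\gamma(x,x)=0$ holds because every summand inside the bracket vanishes when $x=y$. Hence the crux is conditional negative definiteness, which I would build up from the individual summands.

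First I would record the two elementary building blocks. Using the identity $\sum_{i,j} a_i a_j \|u_i - u_j\|^2 = -2\|\sum_i a_i u_i\|^2 \le 0$ whenever $\sum_i a_i = 0$, the squared Euclidean distance $(u,v)\mapsto \|u-v\|^2$ is conditionally negative definite on any $\RR^m$. Since $t\mapsto t^{\alpha/2}$ is a Bernstein function for $\alpha\in(0,2]$, and composing a nonnegative conditionally negative definite kernel with a Bernstein function again yields a conditionally negative definite kernel, the kernel $(u,v)\mapsto\|u-v\|^{\alpha}$ is conditionally negative definite for every $\alpha\in(0,2]$. The second building block is closure under pointwise substitution: if $g$ is conditionally negative definite on $\RR^m\times\RR^m$ and $\phi:\RR^d\to\RR^m$ is an arbitrary map, then $(x,y)\mapsto g(\phi(x),\phi(y))$ is conditionally negative definite, because $\sum_{i,j} a_i a_j\, g(\phi(x_i),\phi(x_j))\le 0$ is just the defining inequality evaluated at the points $\phi(x_i)$.

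Combining these, I would apply the substitution principle with $\phi(x)=A_0 x$ to see that $(x,y)\mapsto\|A_0(x-y)\|^{\alpha_0}$ is conditionally negative definite, and with the (possibly nonlinear) maps $\phi_j(x)=A_j\, c(x)_{I_j}$ to see that each $(x,y)\mapsto\|A_j(c(x)_{I_j}-c(y)_{I_j})\|^{\alpha_j}$ is conditionally negative definite; here the arbitrariness of $\phi$ is precisely what lets the covariate function $c$ enter without any regularity assumption. A finite sum of conditionally negative definite kernels is again of this type, so the bracketed expression $\psi(x,y)$ is conditionally negative definite, nonnegative, and zero on the diagonal. Applying the Bernstein function $t\mapsto t^{\beta}$, $\beta\in(0,1]$, once more gives that $\gamma=\psi^{\beta}$ is conditionally negative definite, nonnegative and still vanishing on the diagonal, which finishes the proof that $\gamma$ is a valid variogram.

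Finally, for $\rho=\exp(-\gamma)$ I would invoke Schoenberg's theorem: a symmetric kernel $\gamma$ is conditionally negative definite if and only if $\exp(-t\gamma)$ is positive definite for every $t>0$. Taking $t=1$ yields positive definiteness of $\rho$, while $\gamma(x,x)=0$ forces $\rho(x,x)=1$ and $\gamma\ge 0$ gives $\rho\in(0,1]$, so $\rho$ is a valid correlation function. I expect the only genuinely nontrivial ingredients to be the two invocations of the Bernstein-function composition property (for the exponents $\alpha_j$ and $\beta$) and Schoenberg's theorem; everything else is bookkeeping with the definition of conditional negative definiteness, and the presence of the covariate map $c$ reduces to the elementary substitution step rather than to any analytic property of $c$.
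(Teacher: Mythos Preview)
Your proof is correct and follows essentially the same route as the paper: both establish conditional negative definiteness by starting from the squared Euclidean distance, applying power transformations in $(0,1]$ (your Bernstein-function argument is precisely the result the paper cites from Berg), and invoking closure under arbitrary substitution and under sums. The only organizational difference is that the paper first lifts everything to $\RR^{d+p}$ via the embedding $x\mapsto(x,c(x))$ and verifies the variogram property for the resulting stationary model there, whereas you apply the substitution step termwise on $\RR^d$; your explicit invocation of Schoenberg's theorem for $\rho$ simply spells out what the paper leaves implicit in the word ``Consequently.''
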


\begin{proof}
Let $\tilde{I}_0=\{1,\ldots,d\}$ and $\tilde{I}_j = \{i + d, \, i \in I_j\}$, $j=1,\ldots,k,$ and $\tilde{c}: \RR^d \to \RR^{d+p}$ with $\tilde{c}(x)=(x,c(x))$. Then, we have
$\gamma(x,y) = \tilde{\gamma}(\tilde c(x),\tilde c(y))$ where
\begin{align}
\label{vario_old}
 \tilde{\gamma}:{}& \RR^{d+p} \times \RR^{d+p} \to [0,\infty), \nonumber\\
 \tilde{\gamma}(x,y) &{}=  \left(\sum\nolimits_{j=0}^{k} \|A_j (x_{\tilde{I}_j} - y_{\tilde{I}_j})\|^{\alpha_{j}}\right)^\beta.
\end{align} 
Thus, it suffices to show that $\tilde \gamma$ is a valid variogram on $\RR^{d+p} \times \RR^{d+p}$.
Firstly, consider the squared Euclidean norm
\begin{align*}
    f:{}& \RR^{d+p} \times \RR^{d+p} \to [0,\infty), \quad 
    f(x,y) ={} \|x-y\|^{2},
\end{align*}
which is conditionally negative definite. 
It is well known that the function 
\begin{align*}
 (x,y) \mapsto f(A_j x_{\tilde{I}_j},A_j y_{\tilde{I}_j})
\end{align*}
is conditionally negative definite on $\RR^{d+p} \times \RR^{d+p}$.
By \cite{berg2007}, we know that the function 
\begin{align*}
    h:{}& \RR^{d+p} \times \RR^{d+p} \to [0,\infty)  \\
     h(x,y)=&{}(f(x,y))^{\alpha_j/2}=
     (\|A_j (x_{\tilde{I}_j} - y_{\tilde{I}_j})\|^2)^{\alpha_j/2}=\|A_j (x_{\tilde{I}_j} - y_{\tilde{I}_j})\|
     ^{\alpha_{j}} ,
\end{align*}
is conditionally negative definite for all $\alpha_{j}/2 \in (0,1]$.
In addition, the property is maintained if these terms are summed up for all $j \in \{1,\ldots,k\}$.
Thus, 
\begin{align*}
 \hat{\gamma}:{}& \RR^{d+p} \times \RR^{d+p} \to [0,\infty), \\
 \hat{\gamma}(x,y) =&{} \left(\sum\nolimits_{j=1}^k \|A_j (x_{\tilde{I}_j} - y_{\tilde{I}_j})\|^{\alpha_j}\right), 
\end{align*}
is conditionally negative definite.
Repeating the above argument of \cite{berg2007}, the same holds true for $\tilde{\gamma}(x,y) = \hat{\gamma}(x,y)^{\beta}$ with $\beta \in (0,1]$ in \eqref{vario_old}.
\end{proof} 

\begin{ex}
Applying the construction of the paper by  \citet{blanchetdavison2011}
directly to the power-type variogram would have resulted in a variogram model of the type
\begin{equation*}
\label{ex_propos1}
 \gamma(x,y) = \left(\|A (x - y)\|^{2}+\|q_3 (c(x)- c(y))\|^{2}\right)^\beta, 
\end{equation*}
where $A$ is of the same form as in \eqref{animat} and $c: \RR^2 \to \RR$. 
This is a special case of Proposition \ref{propnewmodel}, if $\alpha_0=\alpha_1=2$, $k=1$ and $I_1=\{1\}$. However, Proposition \ref{propnewmodel} is more general and allows multiple covariates to be considered simultaneously with various ways to interact among each other and with the coordinates.
\end{ex}
For non-degenerate $A_0$, the variogram in Proposition \ref{propnewmodel} is unbounded with $\lim_{\|s_1-s_2\| \to \infty}\gamma(s_1,s_2)=\infty.$
Then, for the extremal coefficient of the Brown-Resnick process associated to this variogram, it holds that $\lim_{\|s_1-s_2\| \to \infty} \theta(s_1,s_2)=2$, or equivalently $\lim_{\|s_1-s_2\| \to \infty}\chi(s_1,s_2)$=0. Thus, this model assumes that the asymptotic dependence can get arbitrarily weak at large distances. However, a bounded variogram might be more reasonable in certain applications. Proposition $\ref{prop_schlather}$ introduces a valid variogram model based on \citet{schlather2017}. The variogram model results in a bounded variogram for $\alpha < 0$ and an unbounded variogram for $\alpha \geq 0$.
Moreover, it is a generalization of Proposition $\ref{propnewmodel}$ if $\alpha=\beta$. 

\begin{prop}
\label{prop_schlather}
Let $I_1,\ldots,I_k \subset \{1,\ldots,p\}$ be arbitrary index sets and $c: \RR^d \to \RR^p$ be an arbitrary function. Then, for all matrices $A_0 \in \RR^{d \times d }$, $A_j \in \RR^{\lvert I_j \rvert \times \lvert I_j \rvert}$, $j=1,\ldots,k$, and $\alpha_0, \alpha_1, \ldots, \alpha_k \in (0,2]$, $\beta \in (0,1]$ and $\alpha \in (-\infty,1]$, the function
\begin{align}
\label{newmodel2}
 \gamma:& \RR^d \times \RR^d \to [0,\infty), \nonumber\\
     \gamma(x,y) =&{}\frac{\left(1+  \left(\|A_{0} (x - y)\|^{\alpha_0}+\sum\nolimits_{j=1}^{k} \|A_j (c(x)_{I_j} - c(y)_{I_j})\|^{\alpha_{j}}\right)^{\beta}\right)^{\alpha/\beta}-1}{2^{\alpha/\beta}-1}, 
\end{align}
is a valid variogram. Consequently, the function $\rho: \RR^d \times \RR^d \to (0,1]$, $\rho(x,y) = \exp(- \gamma(x,y))$ is a valid correlation function.
\end{prop}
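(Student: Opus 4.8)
The plan is to mirror the proof of Proposition~\ref{propnewmodel}. First I would reuse its covariate-augmentation reduction: with $\tilde I_0,\ldots,\tilde I_k$ and $\tilde c(x)=(x,c(x))$ as there, I write the variogram in \eqref{newmodel2} as $\gamma(x,y)=\tilde\gamma(\tilde c(x),\tilde c(y))$, where $\tilde\gamma=H\circ S$ on $\RR^{d+p}$,
\[
S(x,y)=\sum\nolimits_{j=0}^{k}\|A_j(x_{\tilde I_j}-y_{\tilde I_j})\|^{\alpha_j},\qquad
H(t)=\frac{(1+t^{\beta})^{\alpha/\beta}-1}{2^{\alpha/\beta}-1},\quad t\ge0.
\]
Since conditional negative definiteness (CND) is preserved under precomposition with a fixed map, and since $S$ is CND and nonnegative exactly as established in the proof of Proposition~\ref{propnewmodel}, it suffices to prove that $H\circ S$ is CND on $\RR^{d+p}\times\RR^{d+p}$. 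The diagonal condition is automatic: $S(x,x)=0$ and $H(0)=0$, hence $\gamma(x,x)=0$.

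For the composition I would invoke the same closure principle already used in Proposition~\ref{propnewmodel}, now in full generality: if $S$ is CND and nonnegative and $f$ is a Bernstein function with $f(0)=0$, then $f\circ S$ is again CND \citep{berg2007}. (In Proposition~\ref{propnewmodel} this was applied only to the prototypical Bernstein functions $t\mapsto t^{\alpha_j/2}$ and $t\mapsto t^{\beta}$ with exponent in $(0,1]$.) The whole proposition therefore collapses to the single analytic claim, drawn from \citet{schlather2017}, that $H$ is a Bernstein function with $H(0)=0$. Nonnegativity of $H$ is a quick sign check: for $\alpha>0$ both numerator and denominator are positive, for $\alpha<0$ both are negative, and the degenerate value $\alpha=0$ is handled by continuity, where $H(t)=\log(1+t^{\beta})/\log2$ is Bernstein as a composition of the Bernstein functions $t\mapsto t^{\beta}$ and $s\mapsto\log(1+s)$. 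The substantive requirement is complete monotonicity of
\[
H'(t)=\frac{\alpha}{2^{\alpha/\beta}-1}\,t^{\beta-1}\,(1+t^{\beta})^{\alpha/\beta-1},
\]
whose leading constant is positive in every case.

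The main obstacle is precisely this complete-monotonicity statement, and it is more delicate than it first appears. When $\alpha\le\beta$ (so $\alpha/\beta\le1$) it is routine: $t^{\beta-1}=t^{-(1-\beta)}$ is completely monotone, $s\mapsto(1+s)^{\alpha/\beta-1}$ is completely monotone since $\alpha/\beta-1\le0$ and composes with the Bernstein function $t\mapsto t^{\beta}$ to a completely monotone function, and a product of completely monotone functions is completely monotone (equivalently, for $0<\alpha\le\beta$ the unnormalised map $t\mapsto(1+t^{\beta})^{\alpha/\beta}$ is a composition of the Bernstein functions $s\mapsto s^{\alpha/\beta}$ and $t\mapsto1+t^{\beta}$, and subtracting its value at $0$ and normalising preserves the Bernstein property). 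The genuinely hard regime is $\beta<\alpha\le1$, i.e.\ $\alpha/\beta>1$: there $s\mapsto(1+s)^{\alpha/\beta}$ is convex and \emph{not} Bernstein, so the naive composition argument fails, and it is only the coupling $\alpha=\beta\cdot(\alpha/\beta)\le1$ between the inner exponent $\beta$ and the outer exponent $\alpha/\beta$ that restores complete monotonicity of the product $t^{\beta-1}(1+t^{\beta})^{\alpha/\beta-1}$. I would settle this either by citing \citet{schlather2017} directly or by establishing it through a Laplace-transform/subordination representation of $(1+t^{\beta})^{\alpha/\beta-1}$.

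Finally, the correlation statement follows exactly as the corresponding consequence in Proposition~\ref{propnewmodel}: since $\gamma$ is a valid variogram it is CND, so by Schoenberg's theorem $\rho=\exp(-\gamma)$ is positive definite, and $\rho(x,x)=\exp(-\gamma(x,x))=1$ with $0<\rho\le1$, so $\rho$ is a valid correlation function.
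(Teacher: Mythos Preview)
Your proposal is correct and follows essentially the same line as the paper's proof: both appeal to \citet{schlather2017} for the key analytic fact about $(1+t^\beta)^{\alpha/\beta}$ being (complete) Bernstein when $0<\alpha,\beta\le1$, after invoking Proposition~\ref{propnewmodel} for the conditional negative definiteness of the inner sum $S$. The one organisational difference is that the paper splits into the three cases $\alpha>0$, $\alpha=0$, $\alpha<0$ and, for $\alpha<0$, argues separately that $(1+g^\beta)^{\alpha/\beta}$ is positive definite (so that one minus it, suitably normalised, is a variogram), whereas you absorb all cases into the single claim that $H$ is a Bernstein function with $H(0)=0$; your unified route is a bit cleaner, and your identification of $\beta<\alpha\le1$ as the only regime that genuinely requires the \citet{schlather2017} result is sharper than the paper's treatment.
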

\begin{proof}
The proof is an adaption of the proof of Proposition 1 in \citet{schlather2017}.
By \citet{schlather2017}, the function $f(g)=(1+g^{\beta})^{\frac{\alpha}{\beta}}$
is a complete Bernstein function for any $0< \beta \leq 1$, $0 < \alpha \leq 1$ and for any conditionally negative definite function $g \geq 0$. By Proposition \ref{propnewmodel},
the function $g(x,y)=\|A_{0} (x - y)\|^{\alpha_0}+\sum\nolimits_{j=1}^{k} \|A_j (c(x)_{I_j} - c(y)_{I_j})\|^{\alpha_{j}}$ is such a conditionally negative definite function. Consequently, \eqref{newmodel2} is a valid variogram by the same arguments of \cite{schlather2017}, that is, one uses that constants are conditionally negative definite functions and the set of conditionally negative definite functions forms a cone. 

In the case that $\alpha=0$, the limiting function is a variogram by the characteristics of conditionally negative functions. 

For $\alpha<0$ and a variogram $g$, the function $(1+g)^{\frac{\alpha}{\beta}}$ is positive definite and thus $(1-(1+g)^{\frac{\alpha}{\beta}})/(1-2^{\frac{\alpha}{\beta}})$ is a variogram, see \cite{schlather2017}. 

\end{proof}

\section{Application to heavy rainfall data}
\label{Application}

\subsection{Data description}
We use a data set of historical daily precipitation observations for two regions
delivered by Germany's National Meteorological Service, the Deutscher Wetterdienst (DWD). In our case study, we consider observations from 72 weather stations for the region in Southern Germany and 46 stations for the region in Northern Germany, see Figure \ref{weatherstations}. 
For each station, we look at the daily precipitation height (measured in mm) of the the summer months (June, July and August) over 69 years, that is from 1951 to 2019, and divide these data into yearly blocks. This then gives us 69 annual maxima of daily summer precipitation for each station, which we then use to model the distribution of the block maxima via max-stable processes. In addition to longitude and latitude, which we transform such that the Euclidean distances correspond to the distances in km, we use the altitude (for numerical reasons also in km) of the stations as covariate in the non-stationary models.
As can be seen in the topographical map in Figure \ref{weatherstations}, the region in Northern Germany is flatter than the region in Southern Germany.  

\begin{figure}[h!]
\centering
\includegraphics[width=0.75\textwidth]{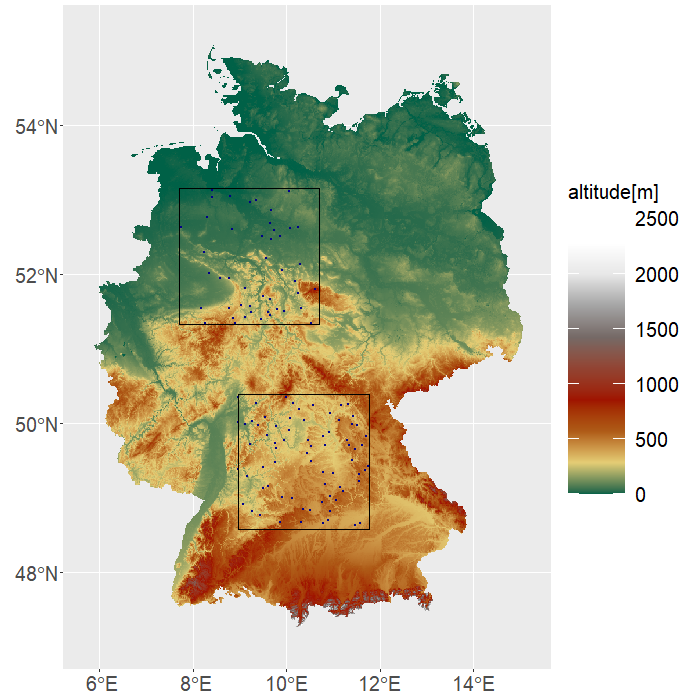}
\caption{Topographical map of Germany with the 118 weather stations (in blue) in Germany considered in our case study. The two study regions are marked by the black boxes, containing 46 stations for the region in Northern Germany and 72 stations for the region in Southern Germany.}
\label{weatherstations}
\end{figure}

\subsection{Model comparison}
\label{modelfittingandcomparison}

In order to narrow the range of relevant spatial models, we perform a preliminary analysis of the spatial dependence structure. In Figure \ref{excof}, it can be seen that the empirically estimated pairwise extremal coefficients 
based on the method of \citet{caperaa1997nonparametric} and implemented in the R package evd \citep[see][]{evd_citation} 
are usually well below two. This justifies the use of asymptotically dependent models such as max-stable models. For small distances, the extremal coefficients do not approach the value 1, therefore we incorporate some nugget effect into our model. Consequently, we utilize two of the most popular max-stable models, which are Brown-Resnick and extremal-$t$ and investigate several stationary and non-stationary variants with a nugget parameter. 

  \begin{figure}[htp]
    \centering
    \begin{subfigure}{0.49\textwidth}
      \centering
      \includegraphics[width=0.99\textwidth]{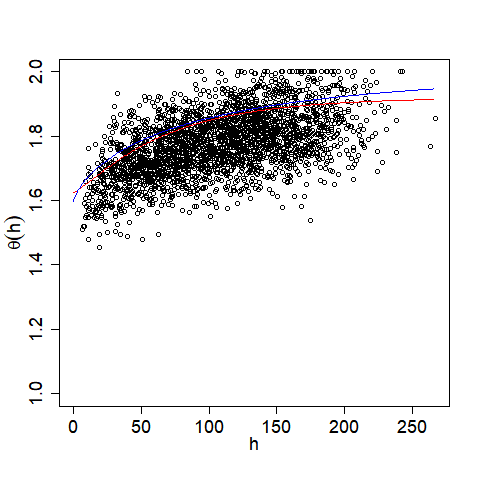}
      \caption{Region of Southern Germany}
    \end{subfigure}%
    \hfill
    \begin{subfigure}{0.49\textwidth}
      \centering
      \includegraphics[width=0.99\textwidth]{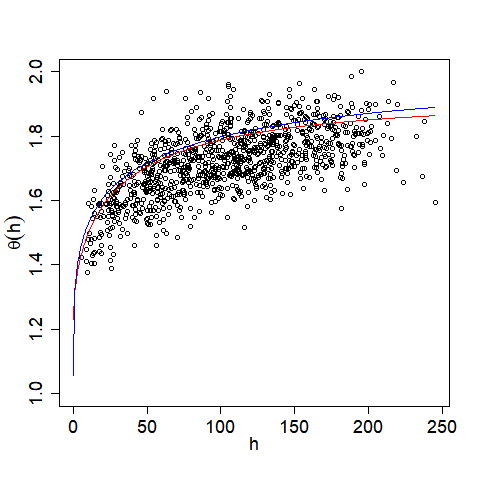}
      \caption{Region of Northern Germany}
    \end{subfigure}
    \caption{Comparison between the fitted (theoretical) extremal coefficient function of the extremal-$t$ model (red line), the fitted (theoretical) extremal coefficient function of the Brown-Resnick model (blue line) and the estimated extremal coefficients based on the F-madogram (black points) for the two isotropic models.}
    \label{excof}
\end{figure}
 
More specifically, we consider Brown-Resnick processes associated to variograms of the type
\begin{align}
\label{vario_final}
 \gamma:{}& \RR^d \times \RR^d \to [0,\infty), \nonumber\\
 \gamma(x,y) ={}& \sigma_{nug}^2\mathds{1}\{\|x-y\| \neq 0\} +  \tilde{\gamma}(x,y)
\end{align}
where $\sigma_{nug}^2 \in [0,\infty)$ is the size of the nugget effect and  $\tilde{\gamma}: \RR^d \times \RR^d \to [0,\infty)$ is a continuous variogram. The models we consider are based on different choices for $\tilde \gamma$.
In addition, whenever $\tilde \gamma: \RR^d \times \RR^d \to [0, \infty)$ is a valid variogram,
\begin{align}
\label{validcorr}
 \tilde{\rho}: \ & \RR^d \times \RR^d \to (0,1], \ \tilde{\rho}(x,y) = \exp(- \tilde{\gamma}(x,y)), 
\end{align}
is a valid correlation model. Thus, the extremal-$t$ models we will consider are based on correlation functions of the type
\begin{align}
\label{corr_final}
 \rho:{} \RR^d \times \RR^d \to [0,1], \
 \rho(x,y) ={} \sigma_{nug}^2\mathds{1}\{\|x-y\| = 0\} +  (1-\sigma_{nug}^2)\tilde{\rho}(x,y),
\end{align}
where $\sigma_{nug}^2 \in [0,1]$ and we insert the previously mentioned conditionally negative definite functions $\tilde{\gamma}$ into \eqref{validcorr} to obtain analogues to the Brown-Resnick models mentioned above.

\subsubsection{Stationary models}

For the isotropic extremal-$t$ model, we utilize for $\tilde{\rho}$
the correlation function \eqref{iso_corr} and $\tilde{\gamma}$ is given by the variogram \eqref{iso_var} for the isotropic Brown-Resnick model. In addition, we take into consideration the geometric anisotropic model with correlation function  \eqref{ani_corr}.
We obtain valid
variogram models from the correlations \eqref{validcorr} as follows:
\begin{align*}
 \tilde \gamma: \RR^d \times \RR^d \to [0, \infty), \ \tilde{\gamma}(x,y) = -\log(\tilde{\rho}(x,y)). 
\end{align*} 

\subsubsection{Non-stationary models}
The model of Proposition \ref{propnewmodel} can be applied to the case where the components $(x_1,x_2)$ correspond to geographical coordinates,
while $c_1,\ldots,c_k$ are covariates. Thus, it is a natural choice to partition the components via the subsets $I_1=\{1\}$, $I_2=\{2\}, \ldots, I_{k} = \{k\}$. Then, $A_0$ can be chosen as the general 
anisotropy matrix $A$ from \eqref{animat} 
while $A_1,\ldots,A_{k}$ are just non-negative real values.
The model is identifiable since
the sets $I_{1},\ldots,I_{k}$ are pairwise disjoint. 
We consider several non-stationary 
variants of that novel non-stationary approach for the extremal-$t$ and Brown-Resnick model
based on only single covariate $c_1$ denoting the altitude similarly to \citet{blanchetdavison2011} and \citet{huser-genton-16}. 
In the nonstationary model $M_1$, the function $\tilde{\gamma}$ is given by 
\begin{align}
 \tilde{\gamma}(x,y) =&{}  \|A_{0} (x-y)\|^{\alpha_0}+\|q_{3} (c(x)-c(y))\|^{\alpha_0},\tag{$M_1$}
\end{align}
with $\alpha_0 \in (0,2]$ and $q_{3} \geq 0$. 

A more general model is the nonstationary model $M_2$ with
\begin{align}
 \tilde{\gamma}(x,y) ={}&  \left(\|A_{0} (x-y)\|^{\alpha_0}+\|q_{3} (c(x)-c(y))\|^{\alpha_1}\right)^\beta,\tag{$M_2$}
\end{align}
where $\alpha_0,\alpha_1 \in (0,2]$, $q_{3} \geq 0$ and $\beta \in (0,1]$.

The previous mentioned models are valid variogram models by Proposition \ref{propnewmodel}. 
The generalization to the variogram model in Proposition \ref{prop_schlather} with $d=2$ is called nonstationary model $M_3$, that is,
\begin{align}
     \tilde{\gamma}(x,y) ={}& \frac{\Big(1+  \left(\|A_{0} (x-y)\|^{\alpha_0}+\|q_{3} (c(x)-c(y))\|^{\alpha_1}\right)^{\beta}\Big)^{\alpha/\beta}-1}{2^{\alpha/\beta}-1}, \tag{$M_3$}
\end{align}
where $\alpha_0,\alpha_1 \in (0,2]$, $q_{3} \geq 0$, $\beta \in (0,1]$ and $\alpha \in (-\infty,1]$.

Moreover, we fit the nonstationary model $M_2$ with fixed $\alpha_0=\alpha_1=2$, which is of the same type as the model in \cite{blanchetdavison2011} for the extremal-$t$ and Brown-Resnick model and denote it by $M_{BD}$.
In addition, we consider the non-stationary extremal-$t$ model by \cite{huser-genton-16} with altitude as covariate in the dependence ranges and denote this model by $M_{HG}$.

\subsection{Inference} \label{subsec:inference}

Denote by $n$ the number of years and let $k$ be the number of locations. Additionally, the annual maximum at location $s_{i}$ for year $m$ is described by $z_{m}(s_{i})$ and the contribution of these data to the bivariate density is $f_{s_{i},s_{j}}(z_{m}(s_{i}),z_{m}(s_{j}))$. 
The dependence parameters are then fitted by maximizing the pairwise log-likelihood function:
\begin{equation*}
	\ell_{p}(z;\psi) = \sum_{m=1}^n  \sum_{i,j=1,i \neq j}^{k} \log(f_{s_{i},s_{j}}(z_{m}(s_{i}),z_{m}(s_{j}))).
\end{equation*} 

We try several optimization methods and report the best results for each model. More precisely, we use an iterative optimization procedure such that the pairwise log-likelihood improves gradually based on previous model parameters.
This means that for the anisotropic model, the model $M_{HG}$, and the non-stationary models $M_2$ and $M_3$, each of which is a generalization of some other model, we use the optimal parameters of the corresponding model, which for fixed parameters corresponds to the more general model. For the non-stationary model $M_1$, we 
optimize in several steps, i.e. we start with the optimization of a part of the parameters and set the others to fixed values. Then we include more parameters in the optimization procedure until we finally optimize over all parameters.

\subsection{Results of the model comparison}
In order to check if the isotropic models describe adequately the extremal dependence for bivariate extreme events, we compare the empirically estimated pairwise extremal coefficients with the theoretical coefficients according to the fitted extremal-$t$ and Brown-Resnick models in Figure \ref{excof}. 
It can be seen that the fit of the coefficients seems to be quite good for the region of Northern Germany. Although the fit of the region of Southern Germany seems to be not as good as for the region of Northern Germany, is is also reasonable. In addition to the isotropic model, we consider the other models mentioned in Section \ref{modelfittingandcomparison} and compare all of them with an information criterion,
so called Takeuchi’s information criterion (TIC)  by \citet{takeuchi1976distribution}, which evaluates the model fit and additionally takes into account the model 
complexity. Then, the model with the lowest TIC value is selected. We consider the previously mentioned stationary and non-stationary extremal-$t$ (ET) and Brown-Resnick (BR) processes for the two regions. As shown in Table \ref{tab:table1}, for the region in Southern Germany, all non-stationary models outperform the stationary models and the non-stationary model $M_1$ results in the best TIC value for the two processes.
In contrast, for the region in Northern Germany, Table \ref{tab:table1_part2} indicates that the isotropic 
model is the best one for both processes. 

The difference between the best extremal-$t$ models in both regions also becomes evident when looking at exact realizations of the models generated by the extremal functions algorithm \citep{dombry2016exact} as displayed in Figure \ref{Sim}. For the region Southern Germany, we see a higher variability of the process at smaller scales compared to Northern Germany. This is in line with the topographical features of these two regions, see Figure \ref{weatherstations}.
\begin{figure}[h!]
\centering
\begin{subfigure}{0.49\textwidth}
      \centering
\includegraphics[height=5cm]{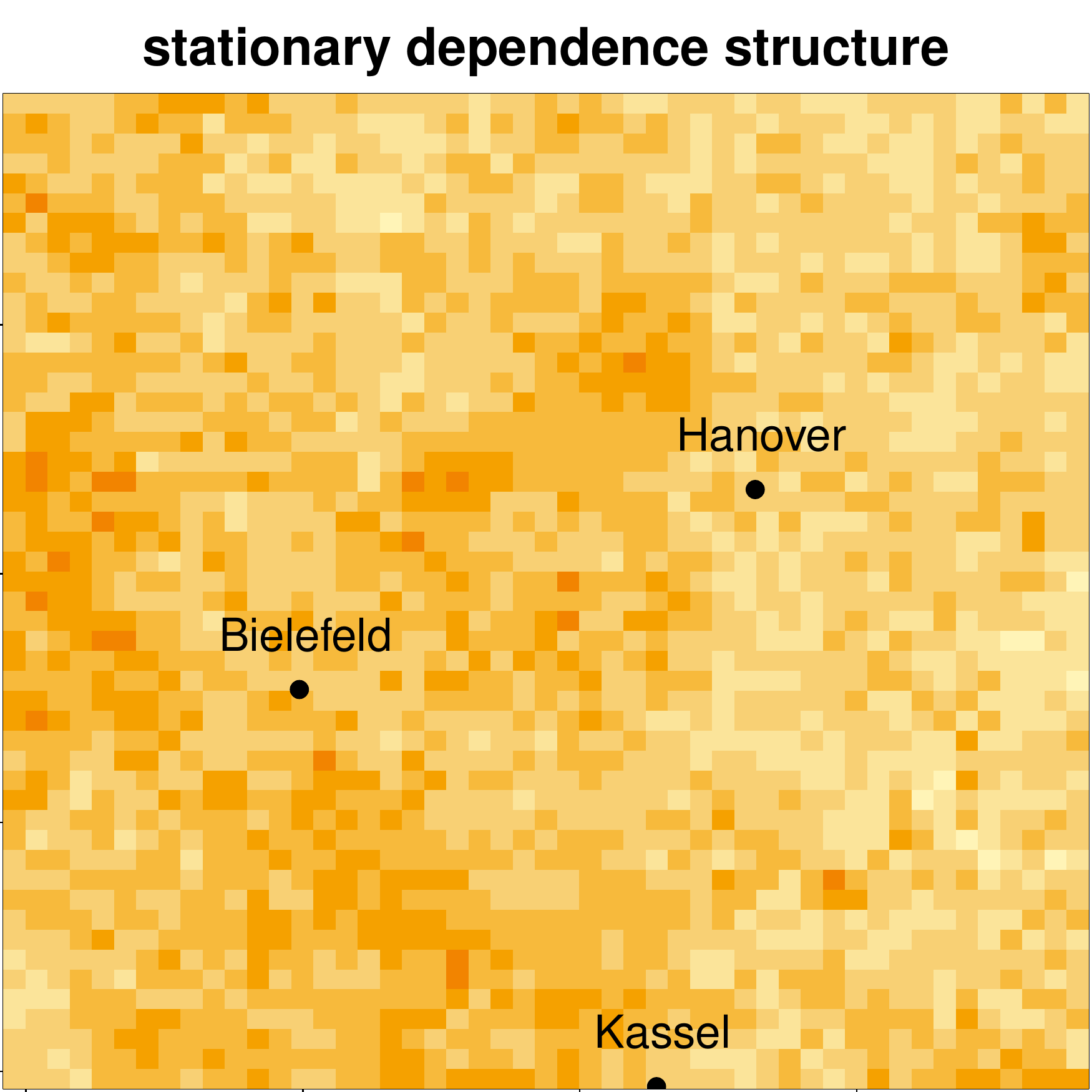}
      \hspace{.15cm}
       \caption{Region of Northern Germany}
        \end{subfigure}%
    \hfill
    \begin{subfigure}{0.49\textwidth}
      \centering
       \includegraphics[height=5cm]{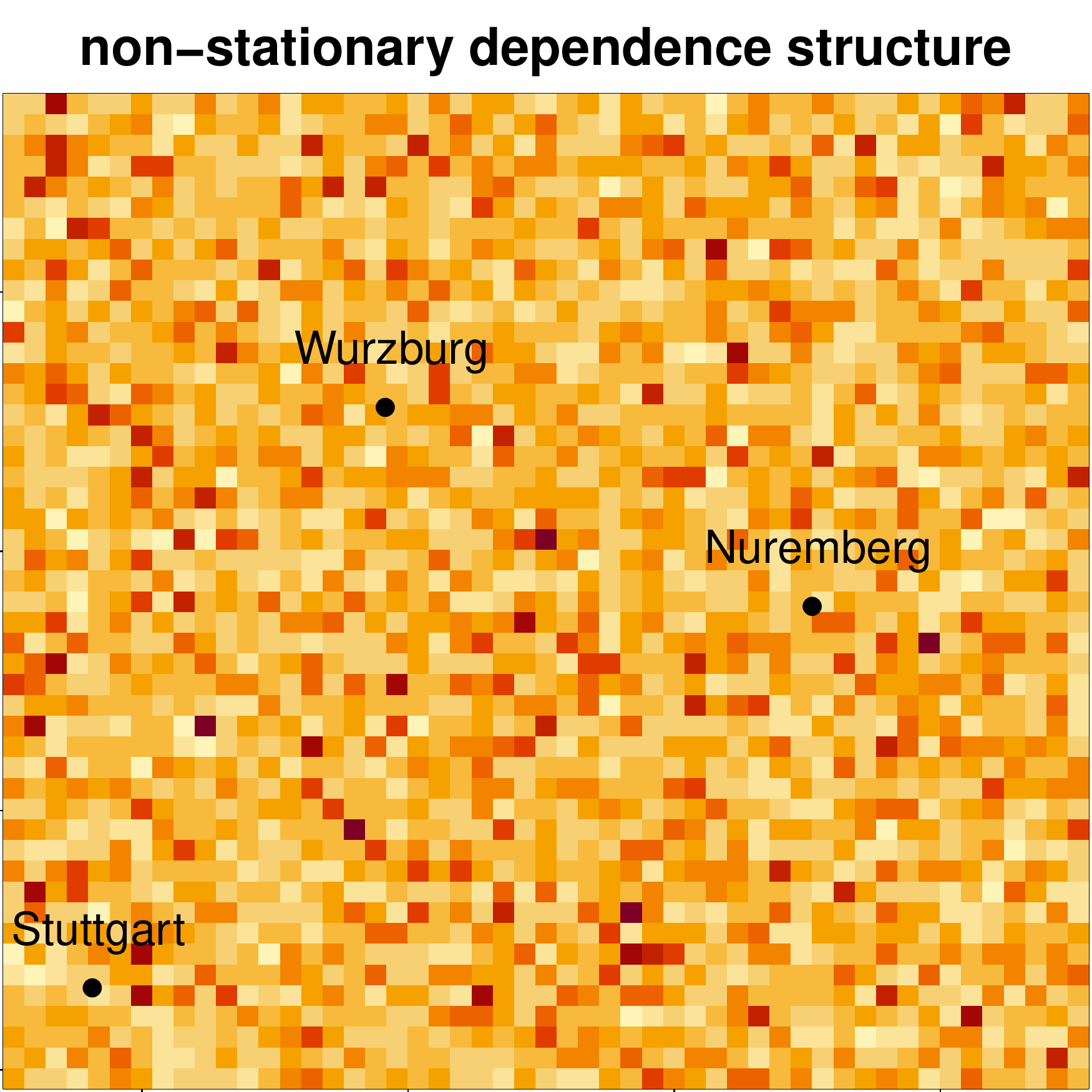}
        \caption{Region of Southern Germany}
            \end{subfigure}
\caption{Simulated realizations of the best extremal-$t$ model in both regions considered.}
\label{Sim}
\end{figure}   

Moreover, the comparison between extremal-$t$ and Brown-Resnick models shows that the extremal-$t$ models 
work better for both regions. 
The generalization to the non-stationary $M_3$ does not improve the model fit in our considered case study because we have an unbounded variogram. However, it might be useful to have more flexibility and also allow a bounded variogram.

\begin{table}[h!]
  \begin{center}
    \caption{TIC and optimal loglikelihood values for Southern Germany.}
    \label{tab:table1}
    {\small
     \begin{tabular}{|l|l|l|l|c|} 
      \hline
      \textbf{Model} & \textbf{TIC of ET} & \textbf{TIC of BR} & \textbf{LogL of ET} & \textbf{LogL of BR}\\
      \hline
      isotropic model & 1\,514\,056\,.0  & 1\,514\,452\,.0 & -756\,876\,.4 & -757\,087\,.8\\
      anisotropic model &  1\,514\,001\,.0 &  1\,514\,415\,.0 &  -756\,816\,.8 & -757\,035\,.9\\ 
      non-stat.~model $M_1$ & \textbf{1\,513\,963\,.0} &  \textbf{1\,514\,378\,.0} & -756\,797\,.7 & -757\,015\,.0  \\
      non-stat.~model $M_{2}$ &   1\,513\,979\,.0  &  1\,514\,396\,.0 &   -756\,797\,.7 & -757\,015\,.0\\
      non-stat.~model  $M_{3}$ &  1\,513\,994\,.0  & 1\,514\,404\,.0 &   -756\,797\,.7 &  -757\,015\,.0\\
      non-stat.~model $M_{BD}$ &  1\,513\,973\,.0 & 1\,514\,380\,.0 & -756\,800\,.4 &-757\,015\,.2 \\ 
      non-stat.~model $M_{HG}$ &  1\,514\,045\,.5 & - & -756\,808\,.7  & -\\ 
      \hline
    \end{tabular}
}
  \end{center}
\end{table}

\begin{table}[h!]
  \begin{center}
    \caption{TIC and optimal loglikelihood values for Northern Germany.}
    \label{tab:table1_part2}
    {\small
        \begin{tabular}{|l|l|l|l|c|}
      \hline
      \textbf{Model} & \textbf{TIC of ET} & \textbf{TIC of BR} & \textbf{LogL of ET} & \textbf{LogL of BR}\\
      \hline
      isotropic model &  \textbf{608\,495\,.3} & \textbf{608\,768\,.7}  & -304\,129\,.5 & -304\,285\,.8\\
      anisotropic model & 608\,533\,.5  &  608\,810\,.3 & -304\,128\,.4 & -304\,284\,.5\\ 
      non-stat.~model $M_1$ & 608\,544\,.3 & 608\,823\,.2 & -304\,128\,.4 & -304\,284\,.5\\
      non-stat.~model $M_{2}$ &  608\,560\,.9 &  608\,838\,.0   &   -304\,128\,.4 &  -304\,284\,.5\\
      non-stat.~model  $M_{3}$ &  608\,568\,.7 &  608\,847\,.8  &  -304\,128\,.4 &  -304\,284\,.5\\
      non-stat.~model $M_{BD}$ & 608\,538\,.8 & 608\,816\,.4  &  -304\,128\,.4 & -304\,284\,.5\\ 
      non-stat.~model $M_{HG}$ & 608\,538\,.9 & - &  -304\,120\,.7 & -\\ 
      \hline
    \end{tabular}
  }
  \end{center}
\end{table}

Moreover, we verify our procedure and assess its uncertainty via parametric bootstrap. More precisely, we draw 100 samples from the best fitted extremal-$t$ model for the region in Southern Germany, that is the non-stationary model $M_1$, 
and re-estimate its parameters via the procedure described in Section \ref{subsec:inference}. The results are displayed in Table \ref{tab:table3}. We can see that, for each parameter, the absolute difference of the true value and the mean is (much) smaller than the standard deviation which verifies the validity of our procedure. However, it becomes also evident that the parameter $q_3$ is quite difficult to estimate as the high deviation indicates. The uncertainty in the parameter estimation for $\alpha_0$ is probably also linked to this issue.

\begin{table}[h!]
  \begin{center}
    \caption{Parametric bootstrap from the best fitted extremal-$t$ model}
    \label{tab:table3}
    {\small 
    \begin{tabular}{|l|c|c|r|} 
     \hline
      \textbf{parameter} & \textbf{true value} &\textbf{mean (simulations)} & \textbf{sd (simulations)}\\
      \hline
      $\nu$ & 4.094 & 5.539 & 2.726 \\
      $\theta$ & -0.726 & -0.241 & 0.556 \\
      $q_1$ & 0.011 & 0.008 & 0.005 \\
			$q_2$ & 0.006  & 0.006 & 0.003 \\
      $q_3$ & 1.302 & 1.043 & 0.882 \\
      $\alpha_0$ & 1.323 & 1.226 & 0.564\\
	  $\sigma_{nug}^2$ & 0.315 & 0.236 & 0.150 \\
	\hline
    \end{tabular}
  }
  \end{center}
  
\end{table}   
       
\section{Bridging between asymptotic dependence and independence}
\label{bridgebetweenasympandsy}

In Section \ref{Novel approach for Non-Stationarity}, we consider max-stable models with fixed covariates in the dependence structure. By definition, these models are asymptotically dependent. Instead of including fixed covariates, the covariates might also be random processes. These can have consequences on both the margins and the dependence structure. 
To illustrate these effects, we assume that the covariate process $Y = \{Y(s), \ s \in \RR^d\}$ possesses $\alpha$-Pareto distributed marginal distributions and include the covariate process in the dependence structure and in the margins in the following way: 
Conditionally on $Y$, let $Z$ be a max-stable process with marginal distribution
\begin{equation}
\label{z_marg_distr}
\PP(Z(s) \leq z \mid Y) = \PP(Z(s) \leq z \mid Y(s)) = \exp(- Y(s) / z), 
\end{equation}
where $z > 0, \ s \in \RR^d$ and with a dependence structure which may depend on $Y$. Thus, $Y(s)$ is chosen as scale parameter in the marginal distribution for $s \in \RR^d$.
By assumption, conditionally on $Y(\cdot)= y(\cdot)$, we can write $Z(\cdot) = y(\cdot) \cdot Z_y(\cdot)$
   where $Z_y(\cdot)$ is a max-stable process with unit Fr\'echet margins and the same dependence structure as $Z \mid Y = y$.
Thus, unconditionally, we can write 
\begin{equation} \label{eq:randomscale}
Z(\cdot) = Y(\cdot) \cdot Z_Y(\cdot).
\end{equation}

In this section, we investigate the extremal dependence behaviour of vectors $(Y(s_1)Z_Y(s_1),Y(s_2)Z_Y(s_2))$. 

A similar construction is considered in \citet{Engelke2019}. They propose the construction $(RW_1,RW_2)=R(W_1,W_2)$ with a non-degenerate random variable $R >0$ that is independent of the bivariate random vector $(W_1,W_1)$ and present several results on the extremal dependence subject to the tail behaviour of $R$. However, our construction differs from \citet{Engelke2019} as we consider a max-stable process $ Z_Y(\cdot)$ instead of a random variable $R$. In addition, this max-stable process is not necessarily independent from $(Y(s_1),Y(s_2))$. 

Depending on the tails of the covariate processes, we derive different results. Our results indicate that for $\alpha \neq 1$, the extremal dependence of the above mentioned random scale construction does not depend on the dependence structure of $Z_Y$ , but on the actual value of $\alpha$ only. For $\alpha \in (0,1)$, we get asymptotic independence in Theorem \ref{thm 51}, while we obtain asymptotic dependence for $\alpha >1$ in Theorem \ref{thm 52}. In case of $\alpha =1$, asymptotic independence might also occur if the dependence in the max-stable process $Z_Y$ weakens sufficiently fast as $Y$ gets large (Theorem \ref{thm 53}). In case of perfect dependence in $Z_Y$, however, i.e., if $Z_Y$ corresponds to a spatially 
constant Fr\'echet distributed variable $R$, we have asymptotic dependence according to Proposition 6, case 1 of \citet{Engelke2019}.

Note that the (conditionally) max-stable processes considered in this section, will not necessarily be sample-continuous. Thus, henceforth, we will turn down any assumptions on the sample paths and consider the general set $\mathcal{F}$ of all functions $f:\RR^d \to \RR$, equipped with  the $\sigma$-algebra generated by the cylinder sets \eqref{cylinder_sets}. 
Consequently, max-stability in $\mathcal{F}$ coincides with max-stability w.r.t.\ all finite-dimensional distributions.

\begin{thm}
\label{thm 51}
Let $Y = \{Y(s), \ s \in \RR^d\}$ be a covariate process whose components are asymptotically independent and $\alpha$-Pareto distributed with some $\alpha \in (0,1)$. Furthermore, let $Z$ be of the form \eqref{eq:randomscale} where, conditionally on $Y$, $Z_Y$ is a max-stable process with unit Fr\'echet margins and a dependence structure that may depend on $Y$.
Then, the process $Z$ is asymptotically independent, i.e.,
$$ \chi_Z(s_1,s_2) = \lim_{u \to \infty} \PP(Z(s_2) > u \mid Z(s_1) > u) = 0 \quad \forall s_1, s_2 \in \RR^d. $$
\end{thm}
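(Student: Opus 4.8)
The plan is to show that the joint exceedance probability $\PP(Z(s_1) > u,\, Z(s_2) > u)$ is of strictly smaller order than the marginal exceedance probability $\PP(Z(s_1) > u)$ as $u \to \infty$, so that their ratio tends to $0$. Throughout I would abbreviate $Y_i = Y(s_i)$ and $V_i = Z_Y(s_i)$, so that $Z(s_i) = Y_i V_i$ and, conditionally on $Y$, each $V_i$ is unit Fréchet, i.e.\ $\PP(V_i > c \mid Y) = 1 - e^{-1/c}$ irrespective of the realisation of $Y$.

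First I would bound the denominator from below. Since $\PP(Z(s_1) > u) = \EE[1 - \exp(-Y_1/u)]$, restricting the expectation to the event $\{Y_1 > u\}$, on which $1 - \exp(-Y_1/u) \ge 1 - e^{-1}$, and using the $\alpha$-Pareto tail $\PP(Y_1 > u) = u^{-\alpha}$ gives $\PP(Z(s_1) > u) \ge (1 - e^{-1})\,u^{-\alpha}$ for $u \ge 1$. (A sharper Breiman/Tauberian computation even yields $\PP(Z(s_1) > u) \sim \Gamma(1-\alpha)\,u^{-\alpha}$, but only the lower bound of order $u^{-\alpha}$ is needed.) For the numerator I would condition on $Y$: then $\PP(Z(s_1) > u, Z(s_2) > u \mid Y) = \PP(V_1 > u/Y_1,\, V_2 > u/Y_2 \mid Y)$, and bounding a joint probability by each of its marginals together with $\PP(V_i > u/Y_i \mid Y) = 1 - e^{-Y_i/u} \le \min(1, Y_i/u)$ yields $\PP(Z(s_1) > u, Z(s_2) > u \mid Y) \le \min\bigl(1, \min(Y_1,Y_2)/u\bigr)$. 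Taking expectations and rewriting through the tail integral gives
\[
\PP(Z(s_1)>u,\, Z(s_2)>u) \le \EE\!\left[\min\!\Big(1,\tfrac{\min(Y_1,Y_2)}{u}\Big)\right] = \frac{1}{u}\int_0^u \PP(Y_1 > v,\, Y_2 > v)\,\dd v.
\]

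The crux is to show this last quantity is $o(u^{-\alpha})$, and this is exactly where asymptotic independence of the covariates enters. Writing $\bar G(v) := \PP(Y_1 > v, Y_2 > v)$, the hypothesis gives $\bar G(v) = o(\PP(Y_1 > v)) = o(v^{-\alpha})$. I would split the integral at a fixed level $V_0$ chosen, for given $\delta > 0$, so that $\bar G(v) \le \delta\,v^{-\alpha}$ for $v > V_0$. The contribution of $[0,V_0]$ is at most $V_0/u = o(u^{-\alpha})$ because $\alpha < 1$, while the contribution of $[V_0,u]$ is at most $\frac{\delta}{1-\alpha}\,u^{-\alpha}(1+o(1))$. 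Multiplying by $u^{\alpha}$ and letting $u \to \infty$ bounds the limsup by $\delta/(1-\alpha)$; since $\delta$ is arbitrary the limit is $0$, whence the joint tail is $o(u^{-\alpha})$. Dividing the numerator bound by the denominator lower bound then gives $\chi_Z(s_1,s_2) = \lim_{u\to\infty}\PP(Z(s_2) > u \mid Z(s_1) > u) = 0$.

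I expect the main obstacle to be this final Cesàro-type step: asymptotic independence supplies only the unquantified relation $\bar G(v) = o(v^{-\alpha})$, with no rate, and one must pass from this pointwise statement to the averaged integral bound, which is precisely what the representation $\EE[\min(1,\min(Y_1,Y_2)/u)] = \tfrac1u\int_0^u \bar G(v)\,\dd v$ makes possible. It is worth emphasising that this step genuinely relies on the hypothesis: for perfectly dependent covariates one has $\bar G(v) \asymp v^{-\alpha}$, the integral is of order $u^{-\alpha}$, and $\chi_Z$ is strictly positive, matching the contrasting behaviour described after the theorem. The use of only bivariate (finite-dimensional) distributions is consistent with working in the general function space $\mathcal{F}$.
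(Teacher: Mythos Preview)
Your argument is correct. Both your proof and the paper's rest on the same core inequality --- bounding the conditional joint exceedance of $(Z_Y(s_1),Z_Y(s_2))$ by the perfectly dependent case, which in your notation is $\PP(V_1>u/Y_1,V_2>u/Y_2\mid Y)\le 1-\exp(-\min(Y_1,Y_2)/u)$ --- but the two proofs finish differently. The paper rewrites this bound as $\PP(Z(s_1)>u,Z(s_2)>u)\le\PP(\tilde Z Y(s_1)>u,\tilde Z Y(s_2)>u)$ for an independent unit Fr\'echet variable $\tilde Z$, observes that the marginals match exactly, and then invokes Proposition~5 of \citet{Engelke2019} (which says that in a product $RW$ the heavier-tailed factor governs extremal dependence) to conclude $\chi_{\tilde Z Y}=\chi_Y=0$. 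You instead carry out the tail computation directly: the representation $\EE[\min(1,\min(Y_1,Y_2)/u)]=u^{-1}\int_0^u\bar G(v)\,\dd v$ together with the Ces\`aro splitting at $V_0$ makes the argument fully self-contained and shows transparently where $\alpha<1$ enters (namely $V_0/u=o(u^{-\alpha})$). The paper's route is shorter once the external reference is granted; yours is more elementary and does not require the reader to consult \citet{Engelke2019}.
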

\begin{proof}

Denote by $ \tilde{Z} $ a unit Fr\'echet distributed random variable that is independent from $Y$.
Consequently, it holds $\bar{F}_{\tilde{Z}} \in RV_{-1}^{\infty}$ while $\bar{F}_{Y} \in RV_{-\alpha_{Y}}^{\infty}$ with $1=\alpha_{\tilde{Z}} > \alpha_{Y}$.

It holds 
\begin{align*}
&\PP(Z(s_1) \leq u,Z(s_2) \leq u) {}=\int_{\mathcal{F}} \PP(y(s_1) Z_{y}(s_1) \leq u, y(s_2) Z_{y}(s_2) \leq u)\,\PP_{Y}(\mathrm{d}y)  \\
&{}=\int_{\mathcal{F}}\PP\left(Z_{y}(s_1)\leq \frac{u}{y(s_1)},Z_{y}(s_2) \leq \frac{u}{y(s_2)}\right)\,\PP_{Y}(\mathrm{d}y)\\
&\leq  \int_{\mathcal{F}}\PP\left(\tilde{Z} \leq \min\left\{\frac{u}{y(s_1)},\frac{u}{y(s_2)}\right\} \right)\,\PP_{Y}(\mathrm{d}y)
{}=\PP(\tilde{Z}Y(s_1)\leq u, \tilde{Z} Y(s_2)\leq u).
\end{align*}

Moreover, for all $s \in \RR^d$, it is
\begin{equation*}
\PP(Z(s) \leq u) = \PP(Z_{Y}(s)Y(s) \leq u) = \PP(\tilde{Z}Y(s) \leq u)=\int_{\mathcal{F}}\exp\Big( -\frac{y(s)}{u}\Big)\PP_{Y}(\mathrm{d}y).
\end{equation*}

Consequently, it is
\begin{align*}
&\PP(Z(s_1)>u,Z(s_2)>u)= \PP(Z_{Y}(s_1)Y(s_1)>u,Z_{Y}(s_2)Y(s_2)>u)\\
&{}=1 - \PP(Z_{Y}(s_1)Y(s_1) \leq u) - \PP(Z_{Y}(s2)Y(s_2) \leq u) \\ 
& \quad +\PP(Z_{Y}(s_1)Y(s_1) \leq u,Z_{Y}(s_2)Y(s_2)\leq u)\\
&{}\leq 1 - \PP(\tilde{Z}Y(s_1) \leq u) - \PP(\tilde{Z}Y(s_2) \leq u) +\PP(\tilde{Z}Y(s_1)\leq u,\tilde{Z}Y(s_2)\leq u)\\
&{}= \PP(\tilde{Z}Y(s_1)>u, \tilde{Z} Y(s_2)>u).
\end{align*}

By \citet[][Proposition 5]{Engelke2019}, it holds
\begin{equation*}
\lim_{u \to \infty} \PP(\tilde{Z}Y(s_1)>u \mid \tilde{Z} Y(s_2)>u) =  \lim_{u \to \infty} \PP(Y(s_1)>u \mid Y(s_2)>u)=0.
\end{equation*}
Thus, it holds  
\begin{align*}
 \chi_{Z}(s_1,s_2) ={}& \lim_{u \to \infty} \PP(Z(s_1)>u \mid Z(s_2)>u)\\
\leq{}& \lim_{u \to \infty} \frac{\PP(\tilde{Z}Y(s_1)>u,\tilde{Z}Y(s_2)>u)}{\PP(\tilde{Z}Y(s_2)>u)} = 0 .
\end{align*}
\end{proof}

\begin{thm}
\label{thm 52}
Let $Y = \{Y(s), \ s \in \RR^d\}$ be a covariate process whose components are $\alpha$-Pareto distributed
with $\alpha>1$.
Furthermore, let $Z$ be of the form \eqref{eq:randomscale} where, conditionally on $Y$, $Z_Y$ is a max-stable process with unit Fr\'echet margins and a dependence structure that may depend on $Y$.
In addition, let the tail dependence coefficient of $Z_y$ be strictly positive  for all $y \in \mathcal{F}$. 
Then, the process $Z$ is asymptotically dependent, i.e.,
$$ \chi_{Z}(s_1,s_2) = \lim_{u \to \infty} \PP(Z(s_2) > u \mid Z(s_1) > u) >0
\quad \forall s_1, s_2 \in \RR^d. $$
\end{thm}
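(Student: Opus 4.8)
The plan is to determine the exact first-order tail asymptotics of both the marginal survival function $\PP(Z(s_i)>u)$ and the joint survival function $\PP(Z(s_1)>u,Z(s_2)>u)$, each rescaled by $u$, and then pass to the ratio. The guiding idea is that, since $\alpha>1$, the components $Y(s)$ have finite mean, so $Y$ has a strictly lighter tail than the unit Fr\'echet marginals of $Z_Y$; consequently $Z(s)=Y(s)Z_Y(s)$ inherits a regularly varying tail of index $-1$ from $Z_Y$ (a Breiman-type effect), and the extremal dependence is governed by $Z_Y$ rather than by $Y$. This is precisely the mechanism opposite to Theorem \ref{thm 51}, where the heavier tail of $Y$ washed out the dependence.

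For the margins I would condition on $Y$ and use \eqref{z_marg_distr}, giving $\PP(Z(s)\le u\mid Y=y)=\exp(-y(s)/u)$, so that
\begin{equation*}
u\,\PP(Z(s)>u)=\int_{\mathcal F} u\bigl(1-e^{-y(s)/u}\bigr)\,\PP_Y(\dd y)\xrightarrow{u\to\infty}\EE[Y(s)]\in(0,\infty),
\end{equation*}
by dominated convergence, the integrand being dominated by the integrable function $y\mapsto y(s)$ (here the assumption $\alpha>1$ enters crucially, guaranteeing $\EE[Y(s)]<\infty$).

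For the joint tail I would again condition on $Y$ and write, using the scale construction \eqref{eq:randomscale},
\begin{equation*}
u\,\PP(Z(s_1)>u,Z(s_2)>u)=\int_{\mathcal F} u\,\PP\bigl(Z_y(s_1)>u/y(s_1),\,Z_y(s_2)>u/y(s_2)\bigr)\,\PP_Y(\dd y).
\end{equation*}
For fixed $y$, I would express the inner probability via inclusion--exclusion and the bivariate exponent function $V_y$ of $Z_y$ from \eqref{jdfrv}, exploit its homogeneity of degree $-1$ to factor out $1/u$, and Taylor expand the three exponentials. The $u$-scaled integrand then converges pointwise to
\begin{equation*}
y(s_1)+y(s_2)-V_y\bigl(1/y(s_1),1/y(s_2)\bigr)=\EE\bigl[\min\{y(s_1)W_y(s_1),\,y(s_2)W_y(s_2)\}\bigr]\ge 0,
\end{equation*}
where $W_y$ denotes the bivariate spectral process of $Z_y$; the last identity uses $\EE[W_y(s_i)]=1$ and $\min=A+B-\max$. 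Since the inner probability is at most $\PP(Z_y(s_2)>u/y(s_2))=1-e^{-y(s_2)/u}$, the $u$-scaled integrand is bounded by $y(s_2)$, so dominated convergence again applies and yields $u\,\PP(Z(s_1)>u,Z(s_2)>u)\to\int_{\mathcal F}\EE[\min\{y(s_1)W_y(s_1),y(s_2)W_y(s_2)\}]\,\PP_Y(\dd y)$.

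It then remains to show this limit is strictly positive. Since $y(s_1),y(s_2)>0$, one has the pointwise bound $\min\{y(s_1)W_y(s_1),y(s_2)W_y(s_2)\}\ge\min\{y(s_1),y(s_2)\}\,\min\{W_y(s_1),W_y(s_2)\}$, whence the integrand is at least $\min\{y(s_1),y(s_2)\}\,\chi_{Z_y}(s_1,s_2)$, using $\chi_{Z_y}(s_1,s_2)=2-V_y(1,1)=\EE[\min\{W_y(s_1),W_y(s_2)\}]$. By hypothesis $\chi_{Z_y}(s_1,s_2)>0$ for every $y$, so the integrand is strictly positive everywhere and its integral against the probability measure $\PP_Y$ is strictly positive. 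Dividing the joint limit by the marginal limit $\EE[Y(s_1)]\in(0,\infty)$ then gives $\chi_Z(s_1,s_2)>0$. The main technical obstacle is the rigorous justification of the two dominated-convergence steps -- in particular securing integrable dominating functions, which is exactly where $\alpha>1$ (finiteness of $\EE[Y(s)]$) is indispensable -- together with the care needed because $Z_y$ is not assumed sample-continuous, so that one must work only with its bivariate distribution and bivariate spectral representation rather than a process-level spectral process.
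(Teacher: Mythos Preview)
Your argument is correct and in fact yields more than the paper's: you obtain the exact value
\[
\chi_Z(s_1,s_2)=\frac{1}{\EE[Y(s_1)]}\int_{\mathcal F}\EE\bigl[\min\{y(s_1)W_y(s_1),y(s_2)W_y(s_2)\}\bigr]\,\PP_Y(\dd y),
\]
and then bound it below. The paper proceeds differently. For the denominator it invokes Breiman's lemma rather than your direct dominated-convergence computation (these are equivalent here). For the numerator, instead of computing the exact limit via the exponent function and spectral representation, it exploits a feature of the $\alpha$-Pareto distribution that you do not use at all: since $Y(s_i)\ge 1$ a.s., one has the pointwise inequality $\PP(Y(s_1)Z_Y(s_1)>u,\,Y(s_2)Z_Y(s_2)>u)\ge \PP(Z_Y(s_1)>u,\,Z_Y(s_2)>u)$, and the right-hand side, divided by $1-e^{-1/u}$, converges to $\int\chi_{Z_y}(s_1,s_2)\,\PP_Y(\dd y)$ by dominated convergence with the trivial majorant $1$. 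This avoids the exponent function, the spectral process, and the integrable majorant $y(s_2)$ altogether, at the price of giving only a lower bound for $\chi_Z$. Conversely, your route is slightly more robust in that it never uses $Y\ge 1$ and would go through for any nonnegative scaling process with finite mean; and your explicit formula could be of independent interest.
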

\begin{proof}

By Breiman's Lemma \citep{breiman1965}, it holds 
\begin{equation*}
\PP(Z_{Y}(s)Y(s) > u)
\sim \EE Y(s) \PP(Z_{Y}(s) > u), \quad u \to \infty,
\end{equation*}
which implies that  $\bar{F}_{Z_{1}} \in RV_{-1}^{\infty}$ for all $s \in \RR^d$.

In addition, using that $Y(s_1), Y(s_2) \geq 1$ a.s., we obtain
\begin{align*}
&\PP(Z_{Y}(s_2)Y(s_2) > u,Z_{Y}(s_1)Y(s_1) > u)\\
&{}= \PP(\min\{Z_{Y}(s_2)Y(s_2),Z_{Y}(s_1)Y(s_1)\} > u  )\\
&{}\geq \PP(\min\{Z_{Y}(s_2),Z_{Y}(s_1)\} > u) {}=\PP(Z_{Y}(s_2) > u,Z_{Y}(s_1) > u).
\end{align*}

Thus, it holds
\begin{align*}
&\lim_{u \to \infty} \frac{\PP(Z_{Y}(s_2)Y(s_2) > u,Z_{Y}(s_1)Y(s_1) > u)}{\PP(Z_{Y}(s_1)Y(s_1) > u)}\\
&{}\geq \lim_{u \to \infty} \frac{\PP(Z_{Y}(s_2) > u,Z_{Y}(s_1) > u)}{ \EE Y(s_1) \cdot \PP(Z_{Y}(s_1)> u)}
{}={} \lim_{u \to \infty} \frac{\PP(Z_{Y}(s_2) > u,Z_{Y}(s_1) > u)}{ \EE Y(s_1) \cdot (1-\exp(-1/u))} \\ 
&{}= \frac{1}{\EE Y(s_1)} \int_{\mathcal{F}} \lim_{u \to \infty} \PP(Z_{y}(s_2) > u \mid Z_{y}(s_1) > u) \,\PP_{Y}(\mathrm{d}y)\\
&{}= \frac{1}{\EE Y(s_1)} \int_{\mathcal{F}} \chi_{Z_{y}}(s_1,s_2) \,\PP_{Y}(\mathrm{d}y) > 0 
\end{align*}
where we use the dominated convergence with majorant 1 and the fact that $\chi_{Z_{y}}(s_1,s_2)$, the tail dependence coefficient of the random variables $Z_{y}(s_1)$ and $Z_{y}(s_2)$, is strictly positive for all $y \in \mathcal{F}$.

\end{proof}

\begin{thm}
\label{thm 53}
Let $Y = \{Y(s), \ s \in \RR^d\}$ be a covariate process, whereby $Y$ is standard Pareto noise, i.e., we have the joint probability density function
$$ f_{(Y(s_1),\ldots, Y(s_n))} (y_1, \ldots,y_n) = \prod_{i=1}^n y_i^{-2}, \quad y_1,\ldots,y_n > 1, $$
for all pairwise distinct $y_1,\ldots,y_n \in \RR^d$. 
Furthermore, let $Z$ be of the form \eqref{eq:randomscale} where,
conditionally on $Y$, $Z_Y$ is a Brown-Resnick process with unit Fr\'echet margins 
associated to some variogram 
 $\gamma_Y$ satisfying 
 $$\gamma_Y(s_1,s_2) = \gamma(s_1,s_2; Y(s_1),Y(s_2)) > c \| Y(s_1) - Y(s_2)\|^\kappa$$ for all $s_1,s_2 \in \RR^d$ and some $c>0$, $\kappa >1$.

Then, the process $Z$ is asymptotically independent, i.e.,
$$ \chi(s_1,s_2) = \lim_{u \to \infty} \PP(Z(s_2) > u \mid Z(s_1) > u) = 0 \quad \forall s_1, s_2 \in \RR^d. $$
\end{thm}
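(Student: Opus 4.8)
The plan is to show that the bivariate exceedance probability of $Z$ decays strictly faster---by a slowly varying factor---than the univariate one, so that their quotient, which is precisely $\chi(s_1,s_2)$, tends to zero. Writing $Y_i=Y(s_i)$ and taking $s_1\neq s_2$ (so that $Y_1,Y_2$ are independent standard Pareto), I would establish the two one-sided estimates
\begin{equation*}
\PP(Z(s_1)>u)\;\geq\;(1-e^{-1})\frac{\log u}{u} \qquad\text{and}\qquad \limsup_{u\to\infty}\,u\,\PP(Z(s_1)>u,\,Z(s_2)>u)\;<\;\infty .
\end{equation*}
Together these give $\chi(s_1,s_2)=\lim_{u\to\infty}\PP(Z(s_2)>u\mid Z(s_1)>u)\leq \lim_{u\to\infty}\frac{\mathrm{const}/u}{(1-e^{-1})\log u/u}=0$.

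For the marginal bound I would use that, by \eqref{z_marg_distr}, $Z(s_1)$ is distributed as $Y_1\tilde Z$ with $\tilde Z$ unit Fr\'echet independent of $Y_1$. Conditioning on $Y_1$ and invoking the elementary inequality $1-e^{-x}\geq(1-e^{-1})x$ for $x\in[0,1]$,
\begin{equation*}
\PP(Z(s_1)>u)=\EE\!\left[1-e^{-Y_1/u}\right]\;\geq\;(1-e^{-1})\,u^{-1}\,\EE\!\left[Y_1\one\{1\leq Y_1\leq u\}\right]\;=\;(1-e^{-1})\frac{\log u}{u},
\end{equation*}
since $\int_1^u y^{-1}\,\dd y=\log u$. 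This is the crucial point: because covariate and Fr\'echet margin share the same tail index $\alpha=1$, their product carries a logarithmic correction and the marginal tail exceeds the pure $1/u$ rate.

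For the joint bound I would condition on the whole process $Y$ and write $P_Y(u)=\PP(Z(s_1)>u,Z(s_2)>u\mid Y)$. Using the conditional margins \eqref{z_marg_distr} once more, each conditional joint exceedance is dominated by a conditional marginal, $P_Y(u)\leq\min\{1-e^{-Y_1/u},\,1-e^{-Y_2/u}\}$, hence $u\,P_Y(u)\leq\min\{Y_1,Y_2\}$ for every $u$. As $Y$ is standard Pareto noise, $\PP(\min\{Y_1,Y_2\}>t)=t^{-2}$ and thus $\EE[\min\{Y_1,Y_2\}]=2<\infty$; integrating the pointwise bound yields $u\,\PP(Z(s_1)>u,Z(s_2)>u)\leq 2$ for all $u$, which is the second estimate above. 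Should one want the exact constant, dominated convergence with the same majorant gives $u\,\PP(\cdots)\to\EE[g_Y]$, where the Brown-Resnick exponent function yields $g_Y=Y_1\bigl(1-\Phi(\tfrac{b_Y}{2}+\tfrac{1}{b_Y}\log\tfrac{Y_1}{Y_2})\bigr)+Y_2\bigl(1-\Phi(\tfrac{b_Y}{2}+\tfrac{1}{b_Y}\log\tfrac{Y_2}{Y_1})\bigr)$ with $b_Y=\sqrt{2\gamma_Y(s_1,s_2)}$; here the hypothesis $\gamma_Y(s_1,s_2)>c\|Y_1-Y_2\|^\kappa$ makes $b_Y$ large, and each $1-\Phi(\cdot)$ correspondingly small, whenever $Y_1$ and $Y_2$ differ.

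The step I expect to be the main obstacle is the integration against the heavy-tailed law of $Y$: the first-order approximation $1-e^{-x}\approx x$ fails on the event $\{\max\{Y_1,Y_2\}\gtrsim u\}$, whose contribution would ordinarily have to be isolated and estimated separately. The device that avoids this entirely is the pointwise domination $u\,P_Y(u)\leq\min\{Y_1,Y_2\}$, valid for all $u$ and integrable precisely because the noise components are independent. This also clarifies the mechanism: asymptotic independence is driven by the logarithmic inflation of the marginal tail at the critical index $\alpha=1$, while the Brown-Resnick form and the variogram growth condition, although they genuinely weaken the conditional dependence, are already accommodated by the elementary upper bound.
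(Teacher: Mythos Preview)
Your proof is correct and substantially more elementary than the paper's. The paper writes out the bivariate Brown--Resnick distribution explicitly via the H\"usler--Reiss exponent function, changes variables $y_i=uz_i$, lower-bounds the marginal tail by $u^{-1}\log u$, and then decomposes the joint-tail double integral into several regions (near and far from the diagonal, $z_i$ small or large), estimating each piece separately; the hypothesis $\gamma_Y>c|Y_1-Y_2|^\kappa$ with $\kappa>1$ enters only at the very end, to drive a residual term $c_u$ to zero at a compatible rate. Your route bypasses this entirely: the trivial pointwise bound $P_Y(u)\le\min\{1-e^{-Y_1/u},1-e^{-Y_2/u}\}\le u^{-1}\min\{Y_1,Y_2\}$ holds for \emph{any} conditional dependence structure of $Z_Y$, and independence of the Pareto noise makes $\min\{Y_1,Y_2\}$ integrable, giving the joint tail bound $2/u$ uniformly. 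Combined with your lower bound $(1-e^{-1})\log u/u$ on the marginal, the ratio is $O(1/\log u)$. The upshot is that neither the Brown--Resnick form nor the variogram growth condition is actually needed for the stated conclusion; asymptotic independence is forced purely by the logarithmic inflation of the marginal tail at the critical index $\alpha=1$, exactly as you diagnose. In particular your bound also covers the case of perfect conditional dependence $Z_Y(s_1)=Z_Y(s_2)=R$, where one still gets $\PP(\text{joint})\sim 2/u$ and hence $\chi=0$, which sits uneasily with the paper's informal remark preceding the theorem.
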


\begin{proof}
It holds
\begin{align*}
&\PP(Z(s_{1}) \leq u,Z(s_{2}) \leq u)\\
&{}=\int_{\mathcal{F}} \PP(Z(s_{1}) \leq u,Z(s_{2}) \leq u \mid Y=y) \,\PP_{Y}(\mathrm{d}y)\\
&{}=\int_{\mathcal{F}}  \PP(y(s_1) Z_{y}(s_1) \leq u,y(s_2) Z_{y}(s_2) \leq u)\,\PP_{Y}(\mathrm{d}y)\\
&{}=\int_{\mathcal{F}} \PP\left(Z_{y}(s_1)\leq \frac{u}{y(s_1)},Z_{y}(s_2) \leq \frac{u}{y(s_2)}\right) \,\PP_{Y}(\mathrm{d}y)
\end{align*}
By construction, the joint distribution of $(Z_y(s_1), Z_y(s_2))$ just depends on the values of 
$(y(s_1), y(s_2))$. Define $a_1(y_1,y_2):= \Phi\left(\frac{b}{2}-\frac{1}{b}\log(\frac{y_{2}}{y_{1}}) \right)$ and $a_2(y_1,y_2):= \Phi\left(\frac{b}{2}-\frac{1}{b}\log(\frac{y_{1}}{y_{2}}) \right)$ where $b:=\sqrt{2\gamma(s_1,s_2; y_1, y_2)}$.
Thus, the above expression simplifies in the following way:

\begin{align*}
&\PP(Z(s_{1}) \leq u,Z(s_{2}) \leq u)\\
&{}=\int_{1}^{\infty}\int_{1}^{\infty}\exp\left(- \left[\frac{y_{1}}{u}a_1(y_1,y_2)+\frac{y_{2}}{u}a_2(y_1,y_2)\right] \right) \,\PP_{(Y(s_1),Y(s_2))}(\mathrm{d}y_1,\mathrm{d}y_2)\\
&{}= \int_{1}^{\infty}\int_{1}^{\infty}\exp\left(- \left[\frac{y_{1}}{u}a_1(y_1,y_2)+\frac{y_{2}}{u}a_2(y_1,y_2) \right] \right) y_{1}^{-2} y_{2}^{-2}\mathrm{d}y_{1}\mathrm{d}y_{2}. 
\end{align*}

By a change of variables $y_i = u z_i$, it holds for the tail dependence coefficient
\begin{align*}
 & \chi(s_i,s_j) 
 = \lim_{u  \to \infty} \PP( Z(s_1) > u \mid Z(s_2) > u) \\ 
&=\lim_{u  \to \infty} \frac{ \PP( Z(s_1) > u)+\PP( Z(s_2) > u)-\PP( \max\{Z(s_1),Z(s_2)\} > u)}{\PP(Z(s_2) > u)} \\
&=\lim_{u  \to \infty}\Big(\frac{\int_{1}^{\infty}\int_{1}^{\infty}\left(1-\exp(-\frac{y_{1}}{u})-\exp(-\frac{y_{2}}{u})\right) y_{1}^{-2} y_{2}^{-2}\mathrm{d}y_{1}\mathrm{d}y_{2}}{\int_{1}^{\infty}(1-\exp(-\frac{y_{2}}{u})) y_{2}^{-2}\mathrm{d}y_{2}} \\
&+ \frac{\int_{1}^{\infty}\int_{1}^{\infty}\exp\left(- \Big(\frac{y_{1}}{u}\Phi\left(\frac{b}{2}-\frac{1}{b}\log(\frac{y_{2}}{y_{1}}) \right)+\frac{y_{2}}{u}\Phi\left(\frac{b}{2}-\frac{1}{b}\log(\frac{y_{1}}{y_{2}}) \right)\Big) \right) y_{1}^{-2} y_{2}^{-2}\mathrm{d}y_{1}\mathrm{d}y_{2}}{\int_{1}^{\infty}(1-\exp(-\frac{y_{2}}{u}))y_{2}^{-2}\mathrm{d}y_{2}}\Big) \\
& =\lim_{u  \to \infty} u^{-1} \Big(\frac{\int_{\frac{1}{u}}^{\infty}\int_{\frac{1}{u}}^{\infty}\left(1-\exp(-z_{1})-\exp(-z_{2})\right) (z_{1}z_{2})^{-2} \mathrm{d}z_{1}\mathrm{d}z_{2}}{\int_{\frac{1}{u}}^{\infty}(1-\exp(-z_{2})) z_{2}^{-2} \mathrm{d}z_{2}}  \\
& + \frac{\int_{\frac{1}{u}}^{\infty}\int_{\frac{1}{u}}^{\infty}\exp\left(- \Big(z_{1}\Phi\left(\frac{\tilde{b}}{2}-\frac{1}{\tilde{b}}\log(\frac{z_{2}}{z_{1}}) \right)+z_{2}\Phi\left(\frac{\tilde{b}}{2}-\frac{1}{\tilde{b}}\log(\frac{z_{1}}{z_{2}}) \right)\Big) \right) (z_{1}z_{2})^{-2} \mathrm{d}z_{1}\mathrm{d}z_{2}}{\int_{\frac{1}{u}}^{\infty}(1-\exp(-z)) z^{-2} \mathrm{d}z}\Big),
\end{align*}
where  $\tilde{b}:=\sqrt{2\gamma(s_1,s_2; uz_1, uz_2)}$.

We first consider the denominator and, using that $ 1 - \exp(-z) \geq z - z^2/2$ for all $z \in [0,1] $, we obtain that
\begin{align*}
  & \int_{1/u}^\infty (1 - \exp(-z)) z^{-2} \, \mathrm{d}z
  \geq{} \int_{1/u}^1 (z - z^2/2) z^{-2}\, \mathrm{d}z + \int_{1}^\infty (1 - \exp(-1)) z^{-2} \, \mathrm{d}z \\
  ={}& \log(u)- \frac 1 2 + \frac 1 {2u}+ 1 - \exp(-1)
  ={} \log(u)+ \frac{1}{2u}+ \frac 1 2 - \exp(-1) > \log(u).
\end{align*}
Thus, the denominator is bounded from below by $\log(u)$.

Denote by 
$a_1(u,z_1,z_2):= \Phi\left(\frac{\tilde{b}}{2}-\frac{1}{\tilde{b}}\log(\frac{z_{2}}{z_{1}}) \right)$ and $a_2(u,z_1,z_2):= \Phi\left(\frac{\tilde{b}}{2}-\frac{1}{\tilde{b}}\log(\frac{z_{1}}{z_{2}}) \right)$.
It remains to show that
\begin{align*}
&{} \lim_{u \to \infty} u^{-1} \log(u)^{-1} \Big(\int_{\frac{1}{u}}^{\infty}\int_{\frac{1}{u}}^{\infty}\left(1-\exp(-z_{1})-\exp(-z_{2})\right) (z_{1}z_{2})^{-2} \mathrm{d}z_{1}\mathrm{d}z_{2} \\
&{}+ \int_{\frac{1}{u}}^{\infty}\int_{\frac{1}{u}}^{\infty}\exp\left(- \Big(z_{1}a_1(u,z_1,z_2)+z_{2}a_2(u,z_1,z_2)\Big) \right) (z_{1}z_{2})^{-2} \mathrm{d}z_{1}\mathrm{d}z_{2}=0.
\end{align*}

For the integrands of the double integrals, we use the following result: \\
For $z_1,z_2>0$, we have that
\begin{align*}
    & 1 - \exp(-z_1) - \exp(-z_2) + \exp(-(a_1(u,z_1,z_2)z_1+a_2(u,z_1,z_2)z_2)) \\
={} & (1- \exp(-z_1))(1-\exp(-z_2)) + \int_{a_1(u,z_1,z_2)z_1 +a_2(u,z_1,z_2)z_2}^{z_1+z_2} \exp(-t) \, \mathrm{d}t \\
\leq{}& (1- \exp(-z_1))(1-\exp(-z_2)) \\
   & \quad + \begin{cases}
   1, & z_1 \wedge z_2 \geq 1 \\
   (1-a_1(u,z_1,z_2))z_1 + (1-a_2(u,z_1,z_2))z_2, & z_1 \wedge z_2 < 1 \end{cases} \\
      =:{}& (1- \exp(-z_1))(1-\exp(-z_2)) + g(a_1(u,z_1,z_2),a_2(u,z_1,z_2),z_1,z_2).
\end{align*}

In order to bound the integral over the first summand (the independent case), we may use that 
$1- \exp(-z) \leq \min\{z,1\}$ for $z>0$ and obtain
\begin{align*}
&\lim_{u  \to \infty} u^{-1}\log(u)^{-1}\int_{\frac{1}{u}}^{\infty}\int_{\frac{1}{u}}^{\infty}\left(1-\exp(-z_{1})\right)\left(1-\exp(-z_{2})\right) (z_{1}z_{2})^{-2} \mathrm{d}z_{1}\mathrm{d}z_{2}\\
&{}=\lim_{u  \to \infty} u^{-1}\log(u)^{-1} \Big(\int_{\frac 1 u}^{\infty} (1-\exp(-z))z^{-2}\mathrm{d}z\Big)^{2}\\
&{}\leq \lim_{u  \to \infty} u^{-1}\log(u)^{-1} \Big(\int_{\frac{1}{u}}^{1}z^{-1}\mathrm{d}z+\int_{1}^{\infty}z^{-2}\mathrm{d}z\Big)^{2}\\
&{}=\lim_{u  \to \infty} u^{-1}\log(u)+2u^{-1}+u^{-1}/\log(u)=0.
\end{align*}

We need to split up the second double integral
\begin{align*}
& u^{-1}\log(u)^{-1}\int_{\frac{1}{u}}^{\infty}\int_{\frac{1}{u}}^{\infty}  
  g(a_1(u,z_1,z_2),a_2(u,z_1,z_2),z_1,z_2)
 (z_{1}z_{2})^{-2}\mathrm{d}z_{1}\mathrm{d}z_{2}\\
&{}\leq u^{-1}\log(u)^{-1} \int_{\frac{1}{u}}^{1}\int_{\frac{1}{u}}^{1} [(1-a_1(u,z_1,z_2))z_{1} + (1-a_2(u,z_1,z_2)) z_{2}]
(z_{1}z_{2})^{-2}\mathrm{d}z_{1}\mathrm{d}z_{2}\\
&{} +u^{-1}\log(u)^{-1}\int_{1}^{\infty}\int_{\frac{1}{u}}^{1} [(1-a_1(u,z_1,z_2))z_{1} + (1-a_2(u,z_1,z_2)) z_{2}] (z_{1}z_{2})^{-2}\mathrm{d}z_{1}\mathrm{d}z_{2}\\
&{}+u^{-1}\log(u)^{-1}\int_{1/u}^{1}\int_{1}^{\infty}[(1-a_1(u,z_1,z_2))z_{1} + (1-a_2(u,z_1,z_2)) z_{2}]  (z_{1}z_{2})^{-2}\mathrm{d}z_{1}\mathrm{d}z_{2} \\
&{} + u^{-1}\log(u)^{-1}\int_{1}^{\infty}\int_{1}^{\infty}  (z_{1}z_{2})^{-2}\mathrm{d}z_{1}\mathrm{d}z_{2}
\end{align*}
and find a majorant that is valid for all $u> 1$. To find this, we note that, due to standard probability bounds, for the above choices of $a_1(u,z_1,z_2)$ and $a_2(u,z_1,z_2)$, we have that $a_1(u,z_1,z_2) z_1 + a_2(u,z_1,z_2)z_2 \geq \max\{z_1,z_2\}$ which implies that
\begin{align*}
(1-a_1(u,z_1,z_2)) z_1 + (1-a_2(u,z_1,z_2))z_2 \leq z_1+z_2-\max\{z_1,z_2\} = \min\{z_1,z_2\}. 
\end{align*}

Thus, we get
\begin{align*}
&\lim_{u  \to \infty} u^{-1}\log(u)^{-1} \int_{\frac{1}{u}}^{1}\int_{1}^{\infty} [(1-a_1(u,z_1,z_2))z_{1} + (1-a_2(u,z_1,z_2)) z_{2}]
(z_{1}z_{2})^{-2}\mathrm{d}z_{1}\mathrm{d}z_{2} \\
&\leq{} \lim_{u  \to \infty} u^{-1}\log(u)^{-1} \int_{1/u}^{1}\int_{1}^{\infty} \min\{z_1,z_2\}
(z_{1}z_{2})^{-2} \mathrm{d}z_{1}\mathrm{d}z_{2} \\
&={} \lim_{u  \to \infty} u^{-1}\log(u)^{-1} \int_{1/u}^{1}\int_{1}^{\infty} z_2^{-1} z_1^{-2} \mathrm{d}z_{1}\mathrm{d}z_{2} 
={} \lim_{u  \to \infty} u^{-1} = 0
\end{align*}
and, analogously
\begin{align*}
&{}\lim_{u  \to \infty} u^{-1}\log(u)^{-1} \int_{1}^{\infty} \int_{\frac{1}{u}}^{1} [(1-a_1(u,z_1,z_2))z_{1} + (1-a_2(u,z_1,z_2)) z_{2}] 
(z_{1}z_{2})^{-2}\mathrm{d}z_{1}\mathrm{d}z_{2}\\
&{} \leq{} \lim_{u  \to \infty} u^{-1} = 0. 
\end{align*}
A straightforward computation further reveals that
$$ \lim_{u  \to \infty} u^{-1}\log(u)^{-1}\int_{1}^{\infty}\int_{1}^{\infty} (z_{1}z_{2})^{-2}\mathrm{d}z_{1}\mathrm{d}z_{2} = \lim_{u  \to \infty} u^{-1}\log(u)^{-1} = 0. $$

For the remaining term, we focus on $z_1, z_2 \in (1/u,1)$ and use that
\begin{equation*}
 a_1(u,z_1,z_2)= \Phi\left(\frac{\sqrt{2\gamma(s_1,s_2;uz_1,uz_2)}}{2}-\frac{\log(z_{2}/z_{1})}{\sqrt{2\gamma(s_1,s_2;uz_1,uz_2)}} \right)
\end{equation*}
and 
\begin{equation*}
a_2(u,z_1,z_2)=\Phi\left(\frac{\sqrt{2\gamma(s_1,s_2;uz_1,uz_2)}}{2}-\frac{\log(z_{1}/z_{2})}{\sqrt{2\gamma(s_1,s_2;uz_1,uz_2)}} \right).    
\end{equation*}
For any nonnegative function $f: (1,\infty) \to (0,\infty)$) with
$\lvert z_1-z_2 \rvert > f(u)$, it is
\begin{align*}
   \gamma(s_1,s_2;uz_1,uz_2)\geq c u^{\kappa} f(u)^{\kappa}.  
\end{align*}
Consequently, defining
\begin{equation*}
 c_u := 1- \Phi\left(\frac{\sqrt{2c u^{\kappa} f(u)^{\kappa}}}{2}-\frac{\log(u)}{\sqrt{2c u^{\kappa}f(u)^{\kappa}}} \right) \in (0,1),    
\end{equation*}
we obtain
\begin{equation*}
 1- a_1(u,z_1,z_2) \leq c_u \quad \text{and} \quad 1- a_2(u,z_1,z_2) \leq c_u.   
\end{equation*}
Thus, the remaining double integral can be bounded by
\begin{align*}
& u^{-1}\log(u)^{-1} \int_{\frac{1}{u}}^{1}\int_{\frac{1}{u}}^{1} [(1-a_1(u,z_1,z_2))z_{1} + (1-a_2(u,z_1,z_2)) z_{2}] 
(z_{1}z_{2})^{-2}\mathrm{d}z_{1}\mathrm{d}z_{2} \\
={}&  2 \Big( u^{-1}\log(u)^{-1} \int_{\frac{1}{u}}^{1}\int_{z_2}^{1} [(1-a_1(u,z_1,z_2))z_{1} + (1-a_2(u,z_1,z_2)) z_{2}] 
(z_{1}z_{2})^{-2}\mathrm{d}z_{1}\mathrm{d}z_{2} \Big) \\
\leq{}& 2 \Big( u^{-1}\log(u)^{-1}\int_{\frac{1}{u}}^{1}\int_{z_2}^{z_2+f(u)} \min\{z_1,z_2\} 
(z_{1}z_{2})^{-2}      \mathrm{d}z_{1}\mathrm{d}z_{2} \\
& \quad + c_u u^{-1}\log(u)^{-1}\int_{\frac{1}{u}}^{1}\int_{z_2+f(u)}^{1}  (z_1+z_2) 
(z_{1}z_{2})^{-2}     \mathrm{d}z_{1}\mathrm{d}z_{2} \Big)
\end{align*}

Now, set $f(u)=u^{-1}\log(u)^{\beta}$ with $ 1/\kappa < \beta < 1$. Then, for the first term, it holds
 \begin{align*}
 & \lim_{u  \to \infty} 2 \Big( u^{-1}\log(u)^{-1}\int_{\frac{1}{u}}^{1}\int_{z_2}^{z_2+f(u)} \min\{z_1,z_2\} 
(z_{1}z_{2})^{-2}      \mathrm{d}z_{1}\mathrm{d}z_{2} \Big) \\
 &{}= \lim_{u  \to \infty} 2 \Big( u^{-1}\log(u)^{-1}\int_{\frac{1}{u}}^{1} z_{2}^{-1}
 \int_{z_2}^{z_2+f(u)} z_{1}^{-2}  \mathrm{d}z_{1}\mathrm{d}z_{2} \Big) \\
&{}\leq \lim_{u  \to \infty} 2 \Big( u^{-1}\log(u)^{-1}\int_{\frac{1}{u}}^{1}
f(u) z_{2}^{-3}\mathrm{d}z_{2} \Big) \\
&{}= \lim_{u  \to \infty} u^{-1}\log(u)^{-1}f(u)(u^2-1)=0
 \end{align*}
since $\lim_{u \to \infty} f(u)u/\log(u)=\lim_{u \to \infty} \log(u)^{\beta-1} = 0 $.
For the second term, it is
\begin{align*}
 & \lim_{u  \to \infty} 2 c_u u^{-1}\log(u)^{-1} \Big( \int_{\frac{1}{u}}^{1}\int_{z_2+f(u)}^{1} (z_1+z_2) 
(z_{1}z_{2})^{-2}     \mathrm{d}z_{1}\mathrm{d}z_{2} \Big) \\
&{}\leq   \lim_{u  \to \infty} 2 c_u u^{-1}\log(u)^{-1}\Big( \int_{\frac{1}{u}}^{1}\int_{z_2}^{1} \Big(z_1^{-1}
z_{2}^{-2}+z_2^{-1}
z_{1}^{-2}\Big)  \mathrm{d}z_{1}\mathrm{d}z_{2} \Big) \\
&{}= \lim_{u  \to \infty} 2c_u u^{-1}\log(u)^{-1} \Big( \int_{\frac{1}{u}}^{1}  \Big(-z_{2}^{-1}-z_{2}^{-2}\log(z_2)+z_{2}^{-2}\Big)  \mathrm{d}z_{2} \Big) \\
&{}= \lim_{u  \to \infty} 2c_u u^{-1}\log(u)^{-1} \Big(\log(1/u)+u-\frac{\log(1/u)+1}{1/u}\Big)
{}= \lim_{u  \to \infty} 2c_u = 0
\end{align*}
because
\begin{align*}
&{}\lim_{u \to \infty} \sqrt{ c u^{\kappa} f(u)^{\kappa}}-\frac{\log(u)}{\sqrt{c u^{\kappa} f(u)^{\kappa}}}
=  \lim_{u \to \infty} \sqrt{ c \log(u)^{\beta\kappa} }-\frac{\log(u)^{1-\frac{\beta\kappa}{2}}}{\sqrt{c}}\\
&{}=  \lim_{u \to \infty} c \log(u)^{\frac{\beta\kappa}{2}}-\log(u)^{1-\frac{\beta\kappa}{2}}
= \log(u)^{\frac{\beta\kappa}{2}} \left(c-\log(u)^{1-\beta \kappa}\right)
= \infty.
\end{align*}
\end{proof}

\section{Discussion}
\label{Discussion}
In this paper, we proposed a novel non-stationary approach that can be used for both extremal-$t$ and Brown-Resnick processes, where we include covariates in the corresponding correlation functions and variograms, respectively. As outlined above, our approach can be interpreted as a generalization of the models considered in \cite{blanchetdavison2011}. In contrast to the approach by \citet{huser-genton-16}, our model allows for both bounded and unbounded variograms in the Brown-Resnick case. Compared to the nonparametric methods proposed by \citet{youngman} also for max-stable models, our parametric approach is easier to fit.

In the application, it turned out that the result of the optimization procedure is very sensitive with respect to the initial values. Therefore, we tried different optimization procedures such that, reporting the best parameter values, we obtain reliable results. Thus, the optimization requires careful tuning, which should be investigated further in future research. 

We also took a look at max-stable processes conditional on random covariates and demonstrated that they can result in both asymptotically dependent and asymptotically independent processes as their properties are strongly governed by the tail behaviour and the dependence structure of the covariate process. Consequently, these models are more flexible than classical max-stable models. 
An alternative, more direct approach to obtain non-stationary asymptotically dependent models, could be to use the non-stationary correlation and variogram models in already existing asymptotically independent models.

\subsection*{Acknowledgments}
This work has been supported by the integrated project “Climate
Change and Extreme Events - ClimXtreme Module B - Statistics (subproject B3.1)” funded
by the German Federal Ministry of Education and Research (BMBF) with the grant number 01LP1902I, which is gratefully acknowledged.
In addition, we thank Christopher D\"orr for a helpful suggestion that resulted in the statement of Proposition \ref{prop_schlather}.

\noindent

\bibliographystyle{abbrvnat}
\bibliography{ref.bib}

\begin{thebibliography}{40}
\providecommand{\natexlab}[1]{#1}
\providecommand{\url}[1]{\texttt{#1}}
\expandafter\ifx\csname urlstyle\endcsname\relax
  \providecommand{\doi}[1]{doi: #1}\else
  \providecommand{\doi}{doi: \begingroup \urlstyle{rm}\Url}\fi

\bibitem[Asadieh and Krakauer(2015)]{asadieh2015global}
B.~Asadieh and N.~Y. Krakauer.
\newblock Global trends in extreme precipitation: climate models versus
  observations.
\newblock \emph{Hydrol. Earth Syst. Sci.}, 19\penalty0 (2):\penalty0 877--891,
  2015.

\bibitem[Berg(2008)]{berg2007}
C.~Berg.
\newblock Stieltjes-pick-bernstein-schoenberg and their connection to complete
  monotonicity.
\newblock In J.~Mateu and E.~Porcu, editors, \emph{Positive definite functions:
  From Schoenberg to space-time challenges}, pages 15--45. Dept. of
  Mathematics, University Jaume I, Castellon, Spain, 2008.

\bibitem[Blanchet and Davison(2011)]{blanchetdavison2011}
J.~Blanchet and A.~C. Davison.
\newblock Spatial modeling of extreme snow depth.
\newblock \emph{Ann. Appl. Stat.}, 5\penalty0 (3):\penalty0 1699--1725, 2011.

\bibitem[Bosseler et~al.(2021)Bosseler, Salomon, Schl{\"u}ter, and
  Rubinato]{bosseler2021living}
B.~Bosseler, M.~Salomon, M.~Schl{\"u}ter, and M.~Rubinato.
\newblock Living with urban flooding: A continuous learning process for local
  municipalities and lessons learnt from the 2021 events in {G}ermany.
\newblock \emph{Water}, 13\penalty0 (19):\penalty0 2769, 2021.

\bibitem[Breiman(1965)]{breiman1965}
L.~Breiman.
\newblock On some limit theorems similar to the arc-sin law.
\newblock \emph{Theory Probab. Appl.}, 10\penalty0 (2):\penalty0 323--331,
  1965.

\bibitem[Cap{\'e}ra{\`a} et~al.(1997)Cap{\'e}ra{\`a}, Foug{\`e}res, and
  Genest]{caperaa1997nonparametric}
P.~Cap{\'e}ra{\`a}, A.-L. Foug{\`e}res, and C.~Genest.
\newblock A nonparametric estimation procedure for bivariate extreme value
  copulas.
\newblock \emph{Biometrika}, 84\penalty0 (3):\penalty0 567--577, 1997.

\bibitem[Chevalier et~al.(2021)Chevalier, Martius, and Ginsbourger]{chevalier}
C.~Chevalier, O.~Martius, and D.~Ginsbourger.
\newblock Modeling nonstationary extreme dependence with stationary max-stable
  processes and multidimensional scaling.
\newblock \emph{J. Comput. Graph. Statist.}, 30\penalty0 (3):\penalty0
  745--755, 2021.
\newblock ISSN 1061-8600.

\bibitem[Coles(2001)]{coles2001introduction}
S.~Coles.
\newblock \emph{An Introduction to Statistical Modeling of Extreme Values}.
\newblock Springer, London, 2001.

\bibitem[Coles and Tawn(1996)]{coles1996modelling}
S.~G. Coles and J.~A. Tawn.
\newblock Modelling extremes of the areal rainfall process.
\newblock \emph{J. Roy. Stat. Soc., Ser. B}, 58\penalty0 (2):\penalty0
  329--347, 1996.

\bibitem[Collins et~al.(2013)Collins, Knutti, Arblaster, Dufresne, Fichefet,
  Friedlingstein, et~al.]{ipcc}
M.~Collins, R.~Knutti, J.~Arblaster, J.-L. Dufresne, T.~Fichefet,
  P.~Friedlingstein, et~al.
\newblock Long-term climate change: projections, commitments and
  irreversibility.
\newblock In \emph{Climate Change 2013-The Physical Science Basis: Contribution
  of Working Group I to the Fifth Assessment Report of the Intergovernmental
  Panel on Climate Change}, pages 1029--1136. Cambridge University Press, 2013.

\bibitem[Davison et~al.(2019)Davison, Huser, and Thibaud]{gelfand2019handbook}
A.~Davison, R.~Huser, and E.~Thibaud.
\newblock Spatial extremes.
\newblock In \emph{Handbook of Environmental and Ecological Statistics},
  Chapman \& Hall/CRC Handb. Mod. Stat. Methods, pages 711--744. CRC Press,
  Boca Raton, FL, 2019.

\bibitem[Davison et~al.(2012)Davison, Padoan, and
  Ribatet]{davison2012statistical}
A.~C. Davison, S.~A. Padoan, and M.~Ribatet.
\newblock Statistical modeling of spatial extremes.
\newblock \emph{Stat. Sci.}, 27\penalty0 (2):\penalty0 161--186, 2012.

\bibitem[De~Haan(1984)]{de1984spectral}
L.~De~Haan.
\newblock A spectral representation for max-stable processes.
\newblock \emph{Ann. Probab.}, 12\penalty0 (4):\penalty0 1194--1204, 1984.

\bibitem[Dombry et~al.(2016)Dombry, Engelke, and Oesting]{dombry2016exact}
C.~Dombry, S.~Engelke, and M.~Oesting.
\newblock Exact simulation of max-stable processes.
\newblock \emph{Biometrika}, 103\penalty0 (2):\penalty0 303--317, 2016.

\bibitem[Embrechts et~al.(1997)Embrechts, Kl\"{u}ppelberg, and
  Mikosch]{Embrechts1997}
P.~Embrechts, C.~Kl\"{u}ppelberg, and T.~Mikosch.
\newblock \emph{Modelling extremal events}.
\newblock Springer, Berlin, 1997.

\bibitem[Engelke et~al.(2015)Engelke, Malinowski, Kabluchko, and
  Schlather]{Engelke2015}
S.~Engelke, A.~Malinowski, Z.~Kabluchko, and M.~Schlather.
\newblock Estimation of {H}\"{u}sler-{R}eiss distributions and
  {B}rown-{R}esnick processes.
\newblock \emph{J. R. Stat. Soc. Ser. B. Stat. Methodol.}, 77\penalty0
  (1):\penalty0 239--265, 2015.

\bibitem[Engelke et~al.(2019)Engelke, Opitz, and Wadsworth]{Engelke2019}
S.~Engelke, T.~Opitz, and J.~Wadsworth.
\newblock Extremal dependence of random scale constructions.
\newblock \emph{Extremes}, 22\penalty0 (4):\penalty0 623--666, 2019.

\bibitem[Gaume et~al.(2013)Gaume, Eckert, Chambon, Naaim, and
  Bel]{gaume2013mapping}
J.~Gaume, N.~Eckert, G.~Chambon, M.~Naaim, and L.~Bel.
\newblock Mapping extreme snowfalls in the {F}rench {A}lps using max-stable
  processes.
\newblock \emph{Water Resour. Res.}, 49\penalty0 (2):\penalty0 1079--1098,
  2013.

\bibitem[Genton et~al.(2015)Genton, Padoan, and Sang]{Genton2015}
M.~G. Genton, S.~A. Padoan, and H.~Sang.
\newblock Multivariate max-stable spatial processes.
\newblock \emph{Biometrika}, 102\penalty0 (1):\penalty0 215--230, 2015.

\bibitem[Huser and Davison(2013)]{huser2013composite}
R.~Huser and A.~C. Davison.
\newblock Composite likelihood estimation for the {B}rown--{R}esnick process.
\newblock \emph{Biometrika}, 100\penalty0 (2):\penalty0 511--518, 2013.

\bibitem[Huser and Genton(2016)]{huser-genton-16}
R.~Huser and M.~G. Genton.
\newblock Non-stationary dependence structures for spatial extremes.
\newblock \emph{J. Agric. Biol. Environ. Stat.}, 21\penalty0 (3):\penalty0
  470--491, 2016.

\bibitem[Huser and Wadsworth(2020)]{huser2020advances}
R.~Huser and J.~L. Wadsworth.
\newblock Advances in statistical modeling of spatial extremes.
\newblock \emph{Wiley Interdiscip. Rev.: Comput. Stat.}, 14\penalty0
  (1):\penalty0 e1537, 2020.

\bibitem[Kabluchko(2011)]{kabluchko2011extremes}
Z.~Kabluchko.
\newblock Extremes of independent {G}aussian processes.
\newblock \emph{Extremes}, 14\penalty0 (3):\penalty0 285--310, 2011.

\bibitem[Kabluchko et~al.(2009)Kabluchko, Schlather, and De~Haan]{kab_2009}
Z.~Kabluchko, M.~Schlather, and L.~De~Haan.
\newblock Stationary max-stable fields associated to negative definite
  functions.
\newblock \emph{Ann. Probab.}, 37\penalty0 (5):\penalty0 2042--2065, 2009.

\bibitem[Koh et~al.(2022)Koh, Koch, and Davison]{koh-etal-2022}
J.~Koh, E.~Koch, and A.~C. Davison.
\newblock Space-time extremes of severe us thunderstorm environments.
\newblock arXiv preprint arXiv:2201.05102, 2022.

\bibitem[Nicolet et~al.(2015)Nicolet, Eckert, Morin, and
  Blanchet]{nicolet2015inferring}
G.~Nicolet, N.~Eckert, S.~Morin, and J.~Blanchet.
\newblock Inferring spatio-temporal patterns in extreme snowfall in the
  {F}rench {A}lps using max-stable processes.
\newblock \emph{Procedia Environ. Sci.}, 27:\penalty0 75--82, 2015.

\bibitem[Oesting and Stein(2018)]{oesting2018spatial}
M.~Oesting and A.~Stein.
\newblock Spatial modeling of drought events using max-stable processes.
\newblock \emph{Stoch. Environ. Res. Risk Assess.}, 32\penalty0 (1):\penalty0
  63--81, 2018.

\bibitem[Oesting et~al.(2017)Oesting, Schlather, and
  Friederichs]{oesting2017statistical}
M.~Oesting, M.~Schlather, and P.~Friederichs.
\newblock Statistical post-processing of forecasts for extremes using bivariate
  {B}rown-{R}esnick processes with an application to wind gusts.
\newblock \emph{Extremes}, 20\penalty0 (2):\penalty0 309--332, 2017.

\bibitem[Opitz(2013)]{opitz2013extremal}
T.~Opitz.
\newblock Extremal $t$ processes: Elliptical domain of attraction and a
  spectral representation.
\newblock \emph{J. Multivariate Anal.}, 122:\penalty0 409--413, 2013.

\bibitem[Paciorek and Schervish(2006)]{paciorek-shervish-06}
C.~J. Paciorek and M.~J. Schervish.
\newblock Spatial modelling using a new class of nonstationary covariance
  functions.
\newblock \emph{Environmetrics}, 17\penalty0 (5):\penalty0 483--506, 2006.

\bibitem[Schlather(2002)]{schlather2002models}
M.~Schlather.
\newblock Models for stationary max-stable random fields.
\newblock \emph{Extremes}, 5\penalty0 (1):\penalty0 33--44, 2002.

\bibitem[Schlather and Moreva(2017)]{schlather2017}
M.~Schlather and O.~Moreva.
\newblock A parametric model bridging between bounded and unbounded variograms.
\newblock \emph{Stat}, 6\penalty0 (1):\penalty0 47--52, 2017.

\bibitem[Schlather and Tawn(2002)]{schlather-tawn-02}
M.~Schlather and J.~Tawn.
\newblock Inequalities for the extremal coefficients of multivariate extreme
  value distributions.
\newblock \emph{Extremes}, 5\penalty0 (1):\penalty0 87--102, 2002.

\bibitem[Sebille et~al.(2017)Sebille, Foug{\`e}res, and
  Mercadier]{sebille2017modeling}
Q.~Sebille, A.-L. Foug{\`e}res, and C.~Mercadier.
\newblock Modeling extreme rainfall a comparative study of spatial extreme
  value models.
\newblock \emph{Spat. Stat.}, 21:\penalty0 187--208, 2017.

\bibitem[Shao et~al.(2022)Shao, Hazra, Richards, and Huser]{shao2022flexible}
X.~Shao, A.~Hazra, J.~Richards, and R.~Huser.
\newblock Flexible modeling of nonstationary extremal dependence using
  spatially-fused lasso and ridge penalties.
\newblock arXiv preprint arXiv:2210.05792, 2022.

\bibitem[Smith(1990)]{smith1990max}
R.~L. Smith.
\newblock Max-stable processes and spatial extremes.
\newblock Unpublished manuscript, 1990.

\bibitem[Stephenson(2002)]{evd_citation}
A.~G. Stephenson.
\newblock evd: Extreme value distributions.
\newblock \emph{R News}, 2\penalty0 (2):\penalty0 31--32, 2002.
\newblock URL \url{https://CRAN.R-project.org/doc/Rnews/}.

\bibitem[Takeuchi(1976)]{takeuchi1976distribution}
K.~Takeuchi.
\newblock Distribution of information statistics and criteria for adequacy of
  models.
\newblock \emph{Math. Sci.}, 153:\penalty0 12--18, 1976.

\bibitem[Thibaud et~al.(2016)Thibaud, Aalto, Cooley, Davison, and
  Heikkinen]{thibaud2016bayesian}
E.~Thibaud, J.~Aalto, D.~S. Cooley, A.~C. Davison, and J.~Heikkinen.
\newblock Bayesian inference for the {B}rown-{R}esnick process, with an
  application to extreme low temperatures.
\newblock \emph{Ann. Appl. Stat.}, 10\penalty0 (4):\penalty0 2303--2324, 2016.

\bibitem[Youngman(2020)]{youngman}
B.~D. Youngman.
\newblock Flexible models for nonstationary dependence: Methodology and
  examples.
\newblock arXiv preprint arXiv:2001.06642, 2020.

\end{thebibliography}

\end{document}